\newcommand{\eg}{e.g.}
\newcommand{\ie}{i.e.\xspace} 
\newcommand{\prog}[1]{\ensuremath{\tt #1}\xspace}
\newcommand{\until}{\textcolor{blue}{for}\xspace}
\newcommand{\uuntil}{\textcolor{blue}{until}}
\newcommand{\blue}[1]{\textcolor{blue}{#1}}
\newcommand{\progife}[3]{\prog{\blue{if} \> #1 \> \blue{then} \> #2 \> \blue{else} \> #3}}
\newcommand{\progwhile}[2]{\prog{\blue{while} \>  #1 \> \blue{do} \> \{ \> #2  \> \}}}
\newcommand{\tool}[1]{\textsc{#1}\xspace} 
\newcommand{\lrule}[3]{\textbf{#1}\quad\frac{#2}{#3}}
\newcommand{\infrule}[2]{\frac{#1}{#2}}
\newcommand{\tpl}[1]{\,#1\,} 
\newcommand{\inr}{\operatorname{\mathsf{inr}}}
\newcommand{\inl}{\operatorname{\mathsf{inl}}}
\newcommand{\inj}{\operatorname{\mathsf{inj}}}
\newcommand{\id}{\operatorname{\mathsf{id}}}
\newcommand{\snd}{\operatorname{\mathsf{snd}}}
\newcommand{\Lince}{\tool{Lince}}
\newcommand{\Reals}{\mathbb{R}}
\newcommand{\comp}{\cdot}
\newcommand{\scomp}{\, \blue{;} \,}
\providecommand{\comma}{,\operatorname{}\linebreak[1]}		
\newcommand{\ass}{\mathrel{\coloneqq}}
\newcommand{\sep}{\kern1pt\comma\kern-1pt}
\newcommand{\ssto}[1][]{~\to^{#1}~}
\newcommand{\bsto}{~\Downarrow~}
\newcommand{\stp}{\mathit{stop}}
\newcommand{\skp}{\mathit{skip}}
\newcommand{\Vars}{\mathcal{X}}
\newcommand{\brks}[1]{\langle #1\rangle}
\newcommand{\tconc}{\mathbin{\mbox{\,$\wideparen{}\,$}}}
\providecommand{\lsem}{\llbracket}
\providecommand{\rsem}{\rrbracket}
\providecommand{\sem}[1]{\lsem #1 \rsem}
\renewcommand{\l}{\lambda}
\let\cedilla\c
\renewcommand{\c}{\colon}
\newcommand{\om}{\mathbb{M}}
\newcommand{\mm}{\mathbb{E}}
\newcommand{\mmm}{\mathbb{F}}
\newcommand{\later}{\operatorname{\,\triangleright\,}}
\newcommand{\iso}{\mathbin{\cong}}
\newcommand{\klstar}{\star}  			
\newcommand{\istar}{\dagger}  			
\newcommand{\iistar}{\ddagger}  			
\newcommand{\by}[1]{\text{/\!\!/~#1}}			             
\newcommand{\displaybump}{\hspace{3.5em}\hbox to -\@totalleftmargin{\hfil}}
\newcommand{\fot}{\oname{Trj}}
\newcommand{\bang}{\operatorname!}				             
\newcommand{\appr}{\sqsubseteq}
\newcommand{\lub}{\bigsqcup}
\providecommand{\bigor}{\bigvee}
\providecommand{\mto}{\mapsto}
\providecommand{\xto}[1]{\,\xrightarrow{#1}\,}
\providecommand{\mplus}{{\scriptscriptstyle\bf+}} 	       
\newcommand{\real}{\mathbb{R}}
\newcommand{\realp}{\real_{\mplus}}
\newcommand{\realpe}{\bar\real_{\mplus}}
\newcommand{\eps}{\operatorname\varepsilon}
\def\mfix#1{\oname{#1}\@ifnextchar\bgroup\@mfix{}}	       
\def\@mfix#1{#1\@ifnextchar\bgroup\mfix{}}			           
\newcommand{\oname}[1]{\operatorname{\mathsf{#1}}}
\newcommand{\dr}{{\operatorname{\mathsf{d}}}} 
\newcommand{\ev}{{\operatorname{\mathsf{e}}}} 
\newcommand{\hc}{\textsc{Hyb\-Core}}
\newcommand{\ite}[3]{\mfix{\kern-2pt}{#1}{\,\raisebox{-0.2ex}{\scalebox{.6}[1.45]{$\lhd$}}}{\,#2\,}{\raisebox{-0.2ex}{\scalebox{.6}[1.45]{$\rhd$}}}{\,\mathbin{}#3}}
\def\defbbname#1{\expandafter\def\csname BB#1\endcsname{{\bm{\mathsf{#1}}}}}
\def\defbbnames#1{\ifx#1\defbbnames\else\defbbname#1\expandafter\defbbnames\fi}
\newcommand{\argument}{\operatorname{\mbox{$-\!-$}}}
\newcommand{\myparagraph}[1]{\medskip\noindent\textbf{#1.}~~}
\title{Implementing Hybrid Semantics: From Functional to Imperative} 
\author{Sergey Goncharov\inst{1}, Renato Neves\inst{2} and Jos\'{e} Proen\cedilla{c}a\inst{3}}
\institute{Dept.~of Comp.~Sci., FAU Erlangen-Nürnberg, Germany \and
University of Minho \& INESC-TEC, Portugal \and
CISTER/ISEP, Portugal}
\begin{document}

\maketitle
\allowdisplaybreaks

\begin{abstract}
  Hybrid programs combine digital control with differential equations,
  and naturally appear in a wide range of application domains, from
  biology and control theory to real-time software engineering. The
  entanglement of discrete and continuous behaviour inherent to such
  programs goes beyond the established computer science foundations,
  producing challenges related to \eg\ infinite iteration and
  combination of hybrid behaviour with other effects.
  A systematic treatment of \emph{hybridness} as a dedicated
  computational effect has emerged recently. In particular, a generic
  idealized functional language \hc{} with a sound and adequate
  operational semantics has been proposed. The latter semantics
  however did not provide hints to implementing \hc{} as a runnable
  language, suitable for hybrid system simulation (e.g.\ the semantics
  features rules with uncountably many premises).
  We introduce an imperative counterpart of \hc{}, whose semantics is
  simpler and runnable, and yet intimately related with the semantics
  of \hc{} at the level of \emph{hybrid monads}.  We then establish a
  corresponding soundness and adequacy theorem. To attest that the
  resulting semantics can serve as a firm basis for the implementation
  of typical tools of programming oriented to the hybrid domain, we
  present a web-based prototype implementation to evaluate and inspect
  hybrid programs, in the spirit of \tool{GHCi} for \tool{Haskell} and
  \tool{UTop} for \tool{OCaml}. The major asset of our implementation
  is that it formally follows the operational semantic rules.
\end{abstract}

\section{Introduction}\label{sec:intro}

\myparagraph{The core idea of hybrid programming} Hybrid programming
is a rapidly emerging computational
paradigm~\cite{neves18,Platzer10} that aims at using principles
and techniques from programming theory (\eg\
compositionality~\cite{goncharov18,neves18}, Hoare
calculi~\cite{Platzer10,suenaga11}, theory of
iteration~\cite{adamek11,elgot75}) to provide formal foundations for
developing computational systems that interact with physical
processes. Cruise controllers are a typical example of this pattern; a
very simple case is given by the hybrid program below.
\begin{equation}\label{eq:cc}\tag{cruise controller}
\begin{aligned}
  \!&~\progwhile{true}{\\*~&\qquad\progife{v \leq 10}{(v'= 1 \> \until  \> 1)}
    {(v'= -1 \> \until \> 1)}\\*\!&}  
\end{aligned}
\end{equation}
In a nutshell, the program specifies a digital controller that
periodically measures and regulates a vehicle's velocity ($\prog{v}$):
if the latter is less or equal than $\prog{10}$ the controller
accelerates during $\prog{1}$ time unit, as dictated by the program
statement $\prog{v'= 1 \> \until \> 1}$ ($\prog{v' = 1}$ is a
differential equation representing the velocity's rate of change over
time. The value $\prog{1}$ on the right-hand side of
$\prog{\blue{for}}$ is the duration during which the program statement
runs). Otherwise, it decelerates during the same amount of time
$\prog{(v'= {-1} \> \until \> 1)}$. \Cref{fig:plots} shows the output
respective to this hybrid program for an initial velocity of~5.

\begin{wrapfigure}[12]{r}{68mm}
  \centering
  \vspace{-0.7cm}
  \includegraphics[width=68mm]{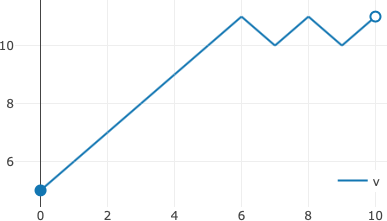}
  \caption{Vehicle's velocity}
  \label{fig:plots}
\end{wrapfigure}
Note that in contrast to standard programming, the cruise controller
involves not only classical constructs (while-loops and conditional
statements) but also differential ones (which are used for describing
physical processes).  This cross-disciplinary combination is the core
feature of hybrid programming and has a notably wide range of
application domains (see~\cite{Platzer10,rajkumar10}).  However, it
also hinders the use of classical techniques of programming, and thus
calls for a principled extension of programming theory to the hybrid
setting.

As is already apparent from the \eqref{eq:cc} example, we stick to an \emph{imperative}
programming style, in particular, in order to keep in touch with the established
denotational models of physical time and computation. A popular alternative to this
for modelling real-time and hybrid systems is to use a \emph{declarative} programming style,
which is done e.g.\ in real-time Maude~\cite{olveczky07} or Modelica~\cite{fritzson2014}.
A well-known benefit of declarative programming is that programs are very easy to 
write, however on the flip side, it is considerably more difficult to define what they 
exactly mean.
 

\myparagraph{Motivation and related work} Most of the previous
research on formal hybrid system modelling has been inspired by
automata theory and Kleene algebra (as the corresponding algebraic
counterpart). These approaches led to the well-known notion of hybrid
automaton~\cite{hybridautomata} and Kleene algebra based languages for
hybrid systems~\cite{Platzer08,algebra_hybrid,HuertayMuniveStruth18}.
From the purely semantic perspective, these formalizations are rather
close and share such characteristic features as \emph{nondeterminism}
and what can be called \emph{non-refined divergence}. The former is
standardly justified by the focus on formal verification of
safety-critical systems: in such contexts overabstraction is usually
desirable and useful. However, coalescing \emph{purely hybrid}
behaviour with nondeterminism detaches semantic models from their
prototypes as they exist in the wild. This brings up several
issues. Most obviously, a nondeterministic semantics, especially not
given in an operational form, cannot directly serve as a basis for
languages and tools for hybrid system testing and simulation.
Moreover, models with nondeterminism baked in do not provide a clear
indication of how to combine hybrid behaviour with effects other than
nondeterminism (e.g.\ probability), or to combine it with
nondeterminism in a different way (\emph{van Glabbeek's
  spectrum}~\cite{Glabbeek90} gives an idea about the diversity of
potentially arising options). Finally, the Kleene algebra paradigm
strongly suggests a relational semantics for programs, with the
underlying relations connecting a state on which the program is run
with the states that the program can reach.  As previously indicated
by H\"{o}fner and M\"{o}ller~\cite{algebra_hybrid}, this view is too
coarse-grained and contrasts to the trajectory-based one where a
program is associated with a trajectory of states (recall
Figure~\ref{fig:plots}). The trajectory-based approach provides an
appropriate abstraction for such aspects as notions of convergence,
periodic orbits, and duration-based
predicates~\cite{duration_calculus1}. This potentially enables analysis
of
properties such as \emph{how fast} our \eqref{eq:cc} example reaches
the target velocity or for \emph{how long} it exceeds~it.

The issue of \emph{non-refined divergence} mentioned earlier arises
from the Kleene algebra law ${\prog{p \scomp 0} = \prog{0}}$
in conjunction
with Fischer-Ladner's encoding of while-loops
$\progwhile{b}{p}$ as $(\prog{b \scomp p})^\ast;\neg
\prog{b}$. 
This creates a havoc with all divergent programs
\begin{flalign*}
 \progwhile{true}{p} 
\end{flalign*}
as they become identified with divergence~$\prog{0}$, thus making the
above example of a \eqref{eq:cc} meaningless.  This issue is
extensively discussed in H\"{o}fner and M\"{o}ller's
work~\cite{algebra_hybrid} on a \emph{nondeterministic} algebra of
trajectories, which tackles the problem by disabling the law
$\prog{p \scomp 0} = \prog{0}$ and by introducing a special operator
for infinite iteration that inherently relies on nondeterminism. This
iteration operator inflates trajectories at so-called `Zeno points'
with arbitrary values, which in our case would entail \eg\ the program
\begin{flalign}\label{eq:zeno}\tag{zeno}
 \prog{x \ass 1} \scomp \progwhile{true}{\blue{wait} \> x \scomp x\ass
  x/2} 
\end{flalign}
to output at time instant $2$ all possible values in the valuation
space (the expression $\prog{\blue{wait} \> t}$ represents a wait call
of $\prog{t}$ time units).  More details about Zeno points can be
consulted in~\cite{algebra_hybrid,goncharov2019}.

In previous work~\cite{goncharov18,goncharov2019}, we pursued a
\emph{purely hybrid} semantics via a simple \emph{deterministic
  functional} language \hc{}, with while-loops for which we used
Elgot's notion of iteration~\cite{elgot75} as the underlying semantic
structure. That resulted in a semantics of finite and infinite
iteration, corresponding to a refined view of divergence.
Specifically, we developed an operational
semantics and also a denotational counterpart for \hc{}. An important
problem of that semantics, however, is that it involves infinitely
many premisses and requires calculating total duration of programs,
which precludes using such semantics directly in implementations.
Both the above examples~\eqref{eq:cc} and~\eqref{eq:zeno} are affected
by this issue. In the present paper we propose an \emph{imperative}
language with a denotational semantics similar to \hc{}'s one, but now
provide a clear recipe for executing the semantics in a constructive
manner.%

\myparagraph{Overview and contributions} Building on our previous
work~\cite{goncharov2019}, we devise operational and denotational
semantics suitable for implementation purposes, and provide a
soundness and adequacy theorem relating both these styles of
semantics.  Results of this kind are well-established yardsticks in
the programming language theory~\cite{winskel93}, and are also
beneficial from a practical perspective.  For example, small-step
operational semantics naturally guides the implementation of
compilers/evaluators for programming languages, whilst denotational
semantics is more abstract, syntax-independent, and guides the study
of program equivalence, of the underlying computational paradigm, and
its combination with other computational effects.

As mentioned before, in our previous work~\cite{goncharov2019} we
introduced a simple functional hybrid language~\hc{} with operational
and denotational monad-based semantics. Here, we work with a similar
imperative while-language, whose semantics is given in terms of a
global state space of trajectories over $\Reals^n$, which is a
commonly used carrier when working with solutions of systems of
differential equations. A key principle we have taken as a basis for
our new semantics is the capacity to determine behaviours of a
program~$\prog{p}$ by being able to examine only some subterms of it.
In order to illustrate this aspect, first note that our semantics does
not reduce program terms $\prog{p}$ and initial states~$\sigma$
(corresponding to valuation functions $\sigma\c \Vars \to \mathbb{R}$
on program variables $\Vars$) to states $\sigma'$, as usual in
classical programming. Instead it reduces \emph{triples}
$\tpl{\prog{p}\sep\sigma\sep t}$ of programs $\prog{p}$, initial
states $\sigma$ and time instants $\prog{t}$ to a state $\sigma'$;
such a reduction can be read as ``given~$\sigma$ as the initial state,
program~$\prog{p}$ produces a state $\sigma'$ at time instant
$\prog{t}$''. Then, the reduction process of
${\tpl{\prog{p}\sep \sigma\sep t}}$ to a state only examines fragments
of $\prog{p}$ or unfolds it when strictly necessary, depending of the
time instant~$\prog{t}$. For example, the reduction of the
\eqref{eq:cc} unfolds the underlying loop only twice for the time
instant $1 + \nicefrac{1}{2}$ (the time instant $1 + \nicefrac{1}{2}$
occurred in the second iteration of the loop).  This is directly
reflected in our prototype implementation of an interactive evaluator
of hybrid programs~\Lince. It is available online and comes with a
series of examples for the reader to explore
(\url{http://arcatools.org/lince}). The plot in \Cref{fig:plots} was
automatically obtained from \Lince, by calling on the previously
described reduction process for a predetermined sequence of time
instants $\prog{t}$.

For the denotational model, we build on our previous
work~\cite{goncharov18,goncharov2019} where hybrid programs are
interpreted via a suitable monad $\BBH$, called the~\emph{hybrid
  monad} and capturing the computational effect of \emph{hybridness},
following the seminal approach of
Moggi~\cite{moggi:1989,moggi:1991}. Our present semantics is more
lightweight and is naturally couched in terms of another
monad~$\BBH_S$, parametrized by a set $S$. In our case, as
mentioned above, $S$ is the set of trajectories over $\Reals^n$ where $n$
is the number of available program variables $\Vars$. The latter monad
is in fact parametrized in a formal sense~\cite{uustalu03} and comes
out as an instance of a recently emerged generic
construction~\cite{diezel20}. A remarkable salient feature of that
construction is that it can be instantiated in a constructive setting
(without using any choice principles) -- although we do not touch upon
this aspect here, in our view this reinforces the fundamental nature
of our semantics. Among various benefits of~$\BBH_S$ over~$\BBH$, the
former monad enjoys a construction of an iteration operator (in the
sense of Elgot~\cite{elgot75}) as a \emph{least fixpoint}, calculated
as a limit of an $\omega$-chain of approximations, while for $\BBH$
the construction of the iteration operator is rather intricate and no
similar characterization is available. A natural question that arises
is: how are $\BBH$ and $\BBH_S$ related? We do answer it by providing
an instructive connection, which sheds light on the construction
of~$\BBH$, by explicitly identifying semantic ingredients which have
to be added to $\BBH_S$ to obtain~$\BBH$. Additionally, this results
in ``backward compatibility'' with our previous work.

\myparagraph{Document structure} After short preliminaries
(\Cref{sec:prelim}), in \Cref{sec:syn_sem} we introduce our
while-language and its operational semantics.
In~\Cref{sec:monad,sec:monads}, we develop the denotational model for our
language and connect it formally to the existing hybrid
monad~\cite{goncharov18,goncharov2019}.
In~\Cref{sec:deno}, we prove a soundness and adequacy result for our operational 
semantics w.r.t.\ the developed model.
\Cref{sec:arch} describes \Lince's architecture.
Finally, \Cref{sec:concl} concludes and briefly discusses future
work. Omitted proofs are found in appendix for reviewing purposes.

\section{Preliminaries}\label{sec:prelim}
We assume familiarity with category
theory~\cite{AdamekHerrlichEtAl90}.  By $\real$, $\realp$ and
$\realpe$ we respectively denote the sets of reals, non-negative
reals, and extended non-negative reals (i.e.\,$\realp$ extended with
the infinity value $\infty$).
Let $[0,\realpe\rrparenthesis$ denote the set of downsets of $\realpe$
having the form~$[0,d]$ ($d\in\realp$) or the form~$[0,d)$
($d\in\realpe$). We call the elements of the dependent sum
$\sum_{I\in [0,\realpe\rrparenthesis} X^I$ \emph{trajectories}
(over~$X$).
By $[0,\realp]$, $[0,\realp)$ and $[0,\realpe)$ we denote the
following corresponding subsets of $[0,\realpe\rrparenthesis$:
$\{[0,d]\mid d\in\realp\}$, $\{[0,d)\mid d\in\realp\}$ and
$\{[0,d)\mid d\in\realpe\}$.
%
%
%
By $X\uplus Y$ we denote the \emph{disjoint union}, which is the
categorical coproduct in the category of sets with the corresponding
left and right injections $\inl\c X\to X\uplus Y$,
$\inr\c Y\to X\uplus Y$. To reduce clutter, we often use plain union
$X\cup Y$ in place of $X\uplus Y$ if $X$ and $Y$ are disjoint by
construction.

By $\ite{a}{b}{c}$ we denote the case distinction construct: $a$ if
$b$ is true and $c$ otherwise. By~$\bang$ we denote the \emph{empty
  function}, i.e.\ a function with the empty domain. For the sake of
succinctness, we use the notation $e^t$ for the function application
$e(t)$ with real-value $t$.

\section{An imperative hybrid while-language and its
  semantics} \label{sec:syn_sem}

This section introduces the syntax and operational semantics of our
language.  We first fix a stock of $n$-variables
$\Vars = \{ \prog{x_1}, \dots, \prog{x_n}\}$ over which we build
atomic programs, according to the grammar
\begin{align*}
  \prog{At(\Vars)} ~\ni~& \prog{x\ass t} ~\mid ~
    \prog{x'_1 = t_1}, \dots, \prog{x'_n = t_n} \> \prog{\until} \>
    \prog{t} \quad \>\\[1ex]
  \prog{LTerm(\Vars)} ~\ni~& \prog{r} ~\mid~ \prog{r} \cdot \prog{x} ~\mid~
    \prog{t + s} 
\end{align*}
where $\prog{x} \in \Vars $, $\prog{r} \in \Reals$, $\prog{t_i}, \prog{t},\prog{s} \in \prog{LTerm(\Vars)}$. 
 An atomic program is thus either a classical assignment
$\prog{x\ass t}$ or a differential statement
$\prog{x'_1 = t_1}, \dots, \prog{x'_n = t_n} \> \prog{\until} \>
\prog{t}$. The latter reads as ``\emph{run the system of differential
  equations $\prog{x'_1 = t_1}, \dots, \prog{x'_n = t_n}$ for $\prog{t}$
  time units}''. We then define the while-language via the grammar
\begin{align*}
  &\prog{Prog(\Vars)} \ni \prog{a}
    ~\mid~ \prog{p \scomp q} ~\mid~
    \progife{b}{p}{q} ~\mid~
    \progwhile{b}{p}
\end{align*}
where $\prog{p,q} \in \prog{Prog(\Vars)}$,
$\prog{a} \in \prog{At(\Vars)}$ and $\prog{b}$ is an element of the
free Boolean algebra generated by the terms $\prog{t} \leq \prog{s}$
and $\prog{t} \geq \prog{s}$. The expression $\prog{\blue{wait} \> t}$
(from the previous section) is encoded as the differential statement
$\prog{x'_1 = 0, \dots, x'_n = 0} \> \prog{\until} \> \prog{t}$.

\begin{remark}
  The systems of differential equations that our language allows are
  always linear. This is not to say that we could not consider more
  expressive systems; in fact we could straightfowardly extend the
  language in this direction, for its semantics (presented below) is
  not impacted by specific choices of solvable systems of differential
  equations. But here we do not focus on such choices regarding the
  expressivity of continuous dynamics and concentrate on a core hybrid
  semantics instead on which to study the fundamentals of hybrid
  programming.
\end{remark}

In the sequel we abbreviate differential statements
$\prog{x'_1 = t_1, \dots, x'_n = t_n} \> \prog{\until} \> \prog{t}$ to
the expression $\prog{\bar{x}' = \bar{t}} ~ \prog{\until} ~ \prog{t}$,
where $\prog{\bar{x}}'$ and $\prog{\bar{t}}$ abbreviate the
corresponding vectors of variables $\prog{x_1}' \dots \prog{x_n}'$ and
linear-combination terms $\prog{t_1} \dots \prog{t_n}$. We call
functions of type $\sigma\c \Vars \to \Reals$ \emph{environments};
they map variables to the respective valuations.  We use the
notation $\sigma\triangledown [\prog{\bar v}/ \prog{\bar x}]$ to
denote the environment that maps each $\prog{x_i}$ in $\prog{\bar x}$
to $\prog{v_i}$ in $\prog{\bar{v}}$ and the rest of variables in the
same way as $\sigma$.  Finally, we denote by
$\phi_\sigma^{\bar{\prog{x}}' = \bar{\prog{t}}} \c [ 0,\infty) \to
\Reals^n$ the solution of a system of differential equations
$\bar{\prog{x}}' = \bar{\prog{t}}$ with $\sigma$ determining the
initial condition. When clear from context, we omit the superscript in
$\phi_\sigma^{\bar{\prog{x}} = \bar{\prog{t}}}$. For a
linear-combination term $\prog{t}$ the expression $\prog{t}\sigma$
denotes the corresponding interpretation according to $\sigma$ and
analogously for $\prog{b}\sigma$ where $\prog{b}$ is a Boolean
expression.

We now introduce a small-step operational semantics for our
language. Intuitively, the semantics establishes a set of rules for
reducing a triple $\langle$program statement, environment, time
instant$\rangle$ to an environment, via a \emph{finite} sequence of
reduction steps. The rules are presented in \Cref{small_step}.  The
terminal configuration $\langle\skp, \sigma, \prog{t}\rangle$ represents a
successful end of a computation, which can then be fed into another
computation (via rule \textbf{(seq-skip$^\to$)}).  Contrastingly,
$\langle\stp, \sigma, \prog{t}\rangle$ is a terminating configuration that
inhibits the execution of subsequent computations. The latter is
reflected in rules \textbf{(diff-stop$^\to$)} and
\textbf{(seq-stop$^\to$)} which entail that, depending on the chosen
time instant, we do not need to evaluate the whole program, but merely
a part of it
-- consequently,
infinite while-loops need not yield infinite reduction
sequences (as explained in
Remark~\ref{rem:inf}). 
Note that time $\prog{t}$ is consumed when applying the rules
\textbf{(diff-stop$^\to$)} and \textbf{(diff-seq$^\to$)} in
correspondence to the duration of the differential statement at hand. The
rules \textbf{(seq)} and \textbf{(seq-skip$^\to$)} correspond to the
standard rules of operational semantics for while languages over an
imperative store~\cite{winskel93}.
\newcommand{\prem}[1]{(\textit{if\/ }#1)}
\newcommand{\nline}{\vspace{-8mm}}
\begin{figure*}[t]
\begin{minipage}{1\textwidth}
\begin{flalign*}
\textbf{(asg$^\to$)}
&&
\prog{x \ass t}\sep\sigma\sep t 
    \ssto
  \skp\sep\sigma\triangledown [\prog{t}\sigma/\prog{x}]\sep t 
&&
\end{flalign*} \nline
\begin{flalign*}
\textbf{(diff-stop$^\to$)}
  &&\prog{\bar{x}' = \bar{u} \>
  \until \> \prog{t}} \sep\sigma\sep t 
    \ssto
  \stp\sep\sigma\triangledown[\phi_\sigma(t)/\bar{\prog{x}}]\sep 0
  &&
  \prem{t < \prog{t}\sigma}
\end{flalign*} \nline
\begin{flalign*}
\textbf{(diff-skip$^\to$)}
&& \prog{\bar{x}' = \bar{u} \>
  \until \> \prog{t}} \sep\sigma\sep t 
    \ssto
  \skp\sep\sigma\triangledown[\phi_\sigma(\prog{t}\sigma)/\bar{\prog{x}}]\sep t - (\prog{t}\sigma)
&&
\prem{t\geq \prog{t}\sigma}
\end{flalign*} \nline
\begin{flalign*}
\textbf{(if-true$^\to$)}
&&
\progife{b}{p}{q}\sep\sigma\sep t \ssto \prog{p} \sep\sigma\sep t
&&
\prem{\prog{b}\sigma=\top}
\end{flalign*} \nline
\begin{flalign*}
\textbf{(if-false$^\to$)}
&&
\progife{b}{p}{q}\sep\sigma\sep t \ssto \prog{q}\sep\sigma\sep t
&&
\prem{\prog{b}\sigma=\bot}
\end{flalign*} \nline
\begin{flalign*}
\textbf{(wh-true$^\to$)}
&&
\progwhile{b}{p}\sep\sigma\sep t \ssto \prog{p} \scomp \progwhile{b}{p}\sep\sigma\sep t
&&
\prem{\prog{b}\sigma=\top}
\end{flalign*} \nline
\begin{flalign*}
\textbf{(wh-false$^\to$)}
&&
\progwhile{b}{p}\sep\sigma\sep t \ssto \skp\sep\sigma\sep t
&&
\prem{\prog{b}\sigma=\bot}
\end{flalign*} \vspace{-6mm}
\begin{flalign*}
\lrule{(seq-stop$^\to$)}{\prog{p}\sep\sigma\sep t \ssto \stp\sep\sigma'\sep t'}{
  \prog{p}\scomp \prog{q}\sep\sigma\sep t  \ssto \stp\sep\sigma'\sep t'}
&&
\lrule{(seq-skip$^\to$)}{\prog{p}\sep\sigma\sep t \ssto \skp\sep\sigma'\sep t'}{
  \prog{p}\scomp \prog{q}\sep\sigma\sep t  \ssto \prog{q}\sep\sigma'\sep t'}
\end{flalign*} \vspace{-5mm}
\begin{flalign*}
&&\lrule{(seq$^\to$)}{\prog{p}\sep\sigma\sep t \ssto \prog{p'}\sep\sigma'\sep t'}{
  \prog{p}\scomp \prog{q}\sep\sigma\sep t  \ssto \prog{p'};\prog{q}\sep\sigma'\sep t'} \qquad \prem{\prog{p'} \neq \stp \textit{ and } \prog{p'} \neq \skp} &&
\end{flalign*} 
  \end{minipage}
  \caption{Small-step Operational Semantics}
  \label{small_step}
\end{figure*}
\begin{remark} \label{rem:inf} Putatively infinite while-loops do not necessarily
  yield infinite reduction steps. Take for example the while-loop
  below whose iterations have always  duration $\prog{1}$.
  \begin{flalign} \label{inf:loop}
    \prog{x\ass 0} \scomp \progwhile{true}{x\ass x+1 \> \scomp \> \prog{\blue{wait}
        \> 1}}
  \end{flalign}
  It yields a finite reduction sequence for the time instant
  $\nicefrac{1}{2}$, as shown below:
  \begin{align*}
    & \prog{x\ass 0} \scomp \progwhile{true}{x\ass x+1 \> \scomp \>
      \prog{\blue{wait}
        \> 1}}\sep \sigma \sep \nicefrac{1}{2}  \to \\
    & \qquad\{ \text{by the rules \textbf{(asg$^\to$)} and \textbf{(seq-skip$^\to$)}} \} \\
    & \progwhile{true}{x\ass x+1 \> \scomp \>
      \prog{\blue{wait}
        \> 1}}\sep \sigma\triangledown [\prog{0}/ \prog{x}]\sep \nicefrac{1}{2}  \to \\
    & \qquad\{ \text{by the rule \textbf{(wh-true$^\to$)}} \} \\
    & \prog{x\ass x+1} \> \scomp \> \prog{\blue{wait} \> 1} \scomp
    \progwhile{true}{x\ass x+1 \> \scomp \> \prog{\blue{wait}
        \> 1}}\sep \sigma\triangledown [\prog{0}/ \prog{x}] \sep \nicefrac{1}{2} \to \\
    & \qquad\{ \text{by the rules \textbf{(asg$^\to$)} and \textbf{(seq-skip$^\to$)}} \} \\
    & \prog{\blue{wait} \> 1} \scomp \progwhile{true}{x\ass x+1
      \> \scomp \> \prog{\blue{wait} \> 1}}\sep \sigma\triangledown [\prog{0 + 1}/ \prog{x}]\sep \nicefrac{1}{2} \to \\
    & \qquad\{ \text{by the rules \textbf{(diff-stop$^\to$)} and \textbf{(seq-stop$^\to$)}} \} \\
    & \stp\sep \sigma\triangledown [\prog{0 + 1}/ \prog{x}]\sep 0 
  \end{align*}
  The gist is that to evaluate program~\eqref{inf:loop} at time
  instant $\nicefrac{1}{2}$, one only needs to unfold the underlying
  loop until surpassing $\nicefrac{1}{2}$ in terms of execution
  time. Note that if the wait statement is removed from the program
  then the reduction sequence would not terminate, intuitively because
  all iterations would be instantaneous and thus the total execution
  time of the program would never reach $\nicefrac{1}{2}$.
\end{remark}
The following theorem entails that our semantics
is deterministic, which is instrumental for our implementation.

\begin{theorem} \label{thm:determ}
  For every program $\prog{p}$, environment $\sigma$, and time instant
  $t$ there is \emph{at most one} applicable reduction rule.
\end{theorem}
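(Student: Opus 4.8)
The plan is to argue by structural induction on the program term $\prog{p}$, showing that for each syntactic shape the left-hand side of the rules can match at most one rule, and that the side conditions of any two rules with the same head are mutually exclusive. First I would split on the outermost constructor of $\prog{p}$. If $\prog{p}$ is an atomic assignment $\prog{x \ass t}$, only \textbf{(asg$^\to$)} has a matching conclusion, and it carries no side condition, so uniqueness is immediate. If $\prog{p}$ is a differential statement $\prog{\bar x' = \bar u \> \until \> t}$, the candidates are \textbf{(diff-stop$^\to$)} and \textbf{(diff-skip$^\to$)}; their side conditions are $t < \prog{t}\sigma$ and $t \geq \prog{t}\sigma$ respectively, which partition all cases, so exactly one applies. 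Similarly, for $\progife{b}{p}{q}$ the rules \textbf{(if-true$^\to$)} and \textbf{(if-false$^\to$)} are guarded by $\prog{b}\sigma = \top$ versus $\prog{b}\sigma = \bot$, and for $\progwhile{b}{p}$ the rules \textbf{(wh-true$^\to$)} and \textbf{(wh-false$^\to$)} are guarded the same way; in each case the Boolean value $\prog{b}\sigma$ is determined, so the choice is forced.

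The only constructor requiring real work is sequential composition $\prog{p} \scomp \prog{q}$, where three rules — \textbf{(seq-stop$^\to$)}, \textbf{(seq-skip$^\to$)}, and \textbf{(seq$^\to$)} — share the conclusion shape. Here I would invoke the induction hypothesis on $\prog{p}$: there is at most one reduction $\prog{p}\sep\sigma\sep t \ssto \prog{p'}\sep\sigma'\sep t'$, so the premise that all three rules consult is itself uniquely determined (or absent, in which case no rule applies). Given that unique premise, the resulting $\prog{p'}$ is either $\stp$, or $\skp$, or neither, and these three cases select \textbf{(seq-stop$^\to$)}, \textbf{(seq-skip$^\to$)}, \textbf{(seq$^\to$)} respectively in a mutually exclusive way — the side condition $\prog{p'} \neq \stp$ and $\prog{p'} \neq \skp$ on \textbf{(seq$^\to$)} being exactly what rules out overlap with the other two. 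Hence at most one rule fires for $\prog{p}\scomp\prog{q}$ as well.

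I expect the main obstacle to be purely a matter of careful bookkeeping rather than conceptual difficulty: one must check that there is genuinely no rule whose conclusion head matches a configuration of the form $\skp\sep\sigma\sep t$ or $\stp\sep\sigma\sep t$ (these are terminal and fire nothing), and that the premise of a \textbf{(seq$^\to$)}-style rule is consumed by the induction hypothesis and not re-derived. One subtlety worth stating explicitly is that a program in normal form such as $\skp \scomp \prog{q}$ has a uniquely determined first step (via \textbf{(seq-skip$^\to$)} applied to \textbf{(asg$^\to$)}-like... actually via \textbf{(seq-skip$^\to$)} with the trivial premise $\skp\sep\sigma\sep t \ssto \skp\sep\sigma\sep t$, if such a reflexive step is admitted, or else it is already terminal) — so one should confirm the intended reading of when $\skp$ and $\stp$ count as configurations versus when they are heads of further reductions. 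Once these conventions are pinned down, the induction closes routinely.
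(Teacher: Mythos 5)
Your proposal is correct and follows essentially the same route as the paper's own proof: a case analysis on the outermost constructor of $\prog{p}$, with mutual exclusivity of the side conditions disposing of the atomic, conditional and while cases, and the sequential-composition case resolved by observing that the (unique, by the induction hypothesis) reduct of the first component selects exactly one of \textbf{(seq-stop$^\to$)}, \textbf{(seq-skip$^\to$)}, \textbf{(seq$^\to$)}. The convention you flag at the end is harmless: $\skp$ and $\stp$ are terminal configurations rather than elements of $\prog{Prog(\Vars)}$, and the side condition on \textbf{(seq$^\to$)} ensures they are never re-embedded under a sequential composition, so a configuration such as $\skp\scomp\prog{q}$ never arises.
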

Let $\ssto[\star]$ be the transitive closure of the reduction relation
$\ssto$ that was previously presented.

\begin{corollary}
  For every program term $\prog{p}$, environments $\sigma$, $\sigma'$,
  $\sigma''$, time instants $\prog{t}$, $\prog{t}'$, $\prog{t}''$, and termination flags
  $\prog{s},\prog{s'} \in \{\skp,\stp\}$, if
  $\prog{p}\sep\sigma\sep t \ssto[\star] s, \sigma', t'$ and
  $\prog{p}\sep\sigma\sep t \ssto[\star] s'\sep\sigma''\sep t''$, then
  the equations $\prog{s} = \prog{s'}$, $\sigma' = \sigma''$ and $\prog{t}' = \prog{t}''$ must
  hold.
\end{corollary}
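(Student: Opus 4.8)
The plan is to obtain this corollary as the standard ``unique normal forms'' consequence of \Cref{thm:determ}. The first step is to upgrade \Cref{thm:determ} from ``at most one applicable \emph{rule}'' to ``$\ssto$ is a partial function'': whenever $\prog p\sep\sigma\sep t \ssto \prog{p_1}\sep\sigma_1\sep t_1$ and $\prog p\sep\sigma\sep t \ssto \prog{p_2}\sep\sigma_2\sep t_2$, then $\prog{p_1}=\prog{p_2}$, $\sigma_1=\sigma_2$ and $t_1=t_2$. By \Cref{thm:determ} the two steps instantiate the same rule, so it only remains to check that each rule, once fixed, determines its successor configuration uniquely. For the axiom schemes (assignments, differential statements, conditionals, while-loops) this is immediate from their conclusions and mutually exclusive side conditions; for the three sequencing rules it follows by an easy structural induction on $\prog p$, since the successor of $\prog p\scomp\prog q$ is read off from the successor of $\prog p\sep\sigma\sep t$, which is unique by the induction hypothesis.

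The second step is to observe that terminal configurations are $\ssto$-irreducible: inspecting \Cref{small_step}, the left-hand program component of every rule is an atomic program, a conditional, a while-loop, or a sequential composition --- never $\skp$ or $\stp$ --- so no configuration of the form $\langle \prog s,\sigma,t\rangle$ with $\prog s\in\{\skp,\stp\}$ has an outgoing transition.

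With these two facts the corollary follows by induction on the length of the reduction $\prog p\sep\sigma\sep t \ssto[\star] \prog s\sep\sigma'\sep t'$. By functionality of $\ssto$ the first steps of the two given reductions reach a common configuration; in the base case that common configuration already is $\langle\prog s,\sigma',t'\rangle$, whose irreducibility forces the second reduction to stop there as well, whence $\langle\prog{s'},\sigma'',t''\rangle=\langle\prog s,\sigma',t'\rangle$, while the inductive step simply applies the hypothesis at the common successor. Equivalently one may phrase the whole argument globally: $\langle\prog p,\sigma,t\rangle$ spans a \emph{unique} maximal $\ssto$-path, and any reduction into an irreducible configuration is an initial segment of that path ending at its last vertex, so two such reductions must coincide. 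Reading off the three components of the equal endpoints then yields $\prog s=\prog{s'}$, $\sigma'=\sigma''$ and $t'=t''$ simultaneously.

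I do not expect a genuine obstacle here. The only point deserving attention is the one just handled --- that a priori one reduction could be a strict prefix of the other, which is exactly what irreducibility of $\{\skp,\stp\}$-configurations excludes --- together with the minor gap between ``at most one applicable rule'' and ``$\ssto$ is functional'', bridged by the small structural induction in the first step.
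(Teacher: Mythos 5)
Your proposal is correct and follows essentially the same route as the paper, which proves the corollary by induction on the number of reduction steps together with \Cref{thm:determ}; you merely make explicit the two auxiliary observations (that $\ssto$ is functional, not just governed by at most one rule, and that $\skp$/$\stp$-configurations are irreducible) that the paper leaves implicit.
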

\begin{proof}
  Follows by induction on the number of reduction steps and Theorem\,\ref{thm:determ}.%
\qed\end{proof}
As alluded above, the operational semantics treats time as a
resource. This is formalised below.
\begin{proposition}\label{prop:shift}
  For all program terms \prog{p} and \prog{q}, environments $\sigma$
  and $\sigma'$, and time instants $\prog{t}$, $\prog{t'}$ and $\prog{s}$, if\/
  $\prog{p}\sep\sigma\sep t \ssto \prog{q}\sep\sigma'\sep \prog{t'}$ then\/
  $\prog{p}\sep\sigma\sep t + s \ssto \prog{q}\sep\sigma'\sep t' + s$;
  and if\/ $\prog{p}\sep\sigma\sep t \ssto \skp\sep\sigma'\sep t'$
  then\/
  $\prog{p}\sep\sigma\sep t+s \ssto \skp\sep\sigma'\sep {t'+s}$.
\end{proposition}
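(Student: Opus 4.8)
The plan is to argue by induction on the derivation of the assumed one-step reduction of $\prog{p}\sep\sigma\sep t$, which amounts to a case analysis on the rule of \Cref{small_step} that justifies it, with the induction hypothesis reused on the premise in the only two rules that have one, namely \textbf{(seq-skip$^\to$)} and \textbf{(seq$^\to$)}. The key preliminary observation is that the statement is deliberately restricted to reductions whose right-hand configuration carries a genuine program term $\prog{q}$ (first implication) or the terminal flag $\skp$ (second implication): the case of $\stp$ is \emph{not} covered, and this restriction is exactly what makes the proposition hold. Indeed, \textbf{(diff-stop$^\to$)} is the only axiom whose side condition ($t<\prog{t}\sigma$) fails to be preserved when the time instant is increased, and the only one whose resulting environment $\sigma\triangledown[\phi_\sigma(t)/\bar{\prog{x}}]$ genuinely depends on the time instant --- but \textbf{(diff-stop$^\to$)} terminates with $\stp$ and therefore lies outside the scope of the claim, and likewise for the derived rule \textbf{(seq-stop$^\to$)}.

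For the base cases, $\prog{p}$ is atomic, a conditional, or a while loop, so the reduction is by one of the axioms. For \textbf{(asg$^\to$)}, \textbf{(wh-false$^\to$)}, \textbf{(if-true$^\to$)}, \textbf{(if-false$^\to$)} and \textbf{(wh-true$^\to$)} no time is consumed: any side condition concerns $\prog{b}\sigma$ only, the resulting environment does not mention the time instant, and the output time equals the input time $t$; hence replacing $t$ by $t+s$ yields exactly the shifted reduction --- landing in the second implication for \textbf{(asg$^\to$)} and \textbf{(wh-false$^\to$)} (whose reduct is $\skp$) and in the first for the other three (whose reduct is a program term). For \textbf{(diff-skip$^\to$)} the side condition is $t\geq\prog{t}\sigma$; since $s\geq 0$ we also have $t+s\geq\prog{t}\sigma$, so the rule remains applicable, its resulting environment $\sigma\triangledown[\phi_\sigma(\prog{t}\sigma)/\bar{\prog{x}}]$ stays independent of the time instant, and the output time passes from $t-(\prog{t}\sigma)$ to $(t+s)-(\prog{t}\sigma)=(t-(\prog{t}\sigma))+s$, landing in the second implication.

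For the inductive case $\prog{p}=\prog{p_1}\scomp\prog{p_2}$, the reduction is by \textbf{(seq-stop$^\to$)}, \textbf{(seq-skip$^\to$)} or \textbf{(seq$^\to$)}, according to how $\prog{p_1}\sep\sigma\sep t$ reduces. The first of these yields $\stp$, so there is nothing to prove; and the second implication for $\prog{p}$ is vacuous, since $\prog{p_1}\scomp\prog{p_2}$ never reduces to $\skp$ in one step. For \textbf{(seq-skip$^\to$)} the premise is $\prog{p_1}\sep\sigma\sep t\ssto\skp\sep\sigma'\sep t'$; the induction hypothesis for $\prog{p_1}$, via its second implication, gives $\prog{p_1}\sep\sigma\sep t+s\ssto\skp\sep\sigma'\sep t'+s$, and applying \textbf{(seq-skip$^\to$)} once more yields $\prog{p_1}\scomp\prog{p_2}\sep\sigma\sep t+s\ssto\prog{p_2}\sep\sigma'\sep t'+s$. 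For \textbf{(seq$^\to$)} the premise is $\prog{p_1}\sep\sigma\sep t\ssto\prog{p_1'}\sep\sigma'\sep t'$ with $\prog{p_1'}\notin\{\skp,\stp\}$; the induction hypothesis, via its first implication, gives $\prog{p_1}\sep\sigma\sep t+s\ssto\prog{p_1'}\sep\sigma'\sep t'+s$, the side condition $\prog{p_1'}\notin\{\skp,\stp\}$ being unaffected, so \textbf{(seq$^\to$)} yields $\prog{p_1}\scomp\prog{p_2}\sep\sigma\sep t+s\ssto\prog{p_1'}\scomp\prog{p_2}\sep\sigma'\sep t'+s$. In both non-$\stp$ sub-cases the reduct is a program term, so they fall under the first implication.

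I do not expect any genuinely hard step: the entire content is the bookkeeping observation that every rule still in scope either ignores the time instant, or carries a side condition of the monotone shape $t\geq\prog{t}\sigma$ together with an update that does not depend on the time instant, so that adding $s$ commutes with reduction. The only points requiring care are seeing why the $\stp$-terminating rules must be excluded from the statement --- which the given formulation already arranges --- and using the non-negativity of $s$, which enters exactly once, to transport the side condition of \textbf{(diff-skip$^\to$)}.
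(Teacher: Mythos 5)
Your proof is correct: the rule-by-rule case analysis, the use of the induction hypothesis in the two sequencing rules with a premise, and the observation that the $\stp$-producing rules (where the environment genuinely depends on the time instant) are deliberately excluded from the statement are exactly the content of this proposition, whose proof the paper in fact leaves implicit. Nothing is missing.
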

%

\section{Towards Denotational Semantics: The Hybrid Monad}
\label{sec:monad}
%
A mainstream subsuming paradigm in denotational semantics is due to
Moggi \cite{moggi:1989,moggi:1991}, who proposed to identify a
\emph{computational effect} of interest as a monad, around which the
denotational semantics is built using standard generic mechanisms,
prominently provided by category theory. In this section we recall 
necessary notions and results, motivated by this approach, to 
prepare ground for our main constructions in the next section.
\begin{definition}[Monad]
  A monad $\BBT$ (on the category of sets and functions) is given by a
  triple $(T,\eta, (\argument)^\star)$, consisting of an endomap~$T$
  over the class of all sets, together with a set-indexed class of
  maps $\eta_X\c X\to TX$ and a so-called \emph{Kleisli lifting}
  sending each $f\c X \to T Y$ to $f^\star\c TX \to TY$ and
obeying \emph{monad laws}: $\eta^{\star}=\id$, 
$f^{\star}\comp\eta=f$,  
$(f^{\star}\comp g)^{\star}=f^{\star}\comp g^{\star}$ (it follows from this definition
that~$T$ extends to a functor and $\eta$ to a natural transformation).

A monad morphism $\theta\c\BBT\to\BBS$ from
$(T,\eta^\BBT, (\argument)^{\star\BBT})$ to
$(S,\eta^\BBS, (\argument)^{\star\BBS})$ is a natural
transformation $\theta\c T\to S$ such  that
$\theta\comp\eta^\BBT = \eta^\BBS$ and
$\theta\comp f^{\star\BBT} = (\theta\comp
f)^{\klstar\BBS}\comp\theta$.
\end{definition}
We will continue to use bold capitals (e.g.~$\BBT$) for monads over the corresponding
endofunctors written as capital Romans (e.g.~$T$).

In order to interpret while-loops one needs additional structure on
the monad.
\begin{definition}[Elgot Monad]\label{def:elgot}
  A monad $\BBT$ 
  is called \emph{Elgot} if it is equipped with an \emph{iteration}
  operator $(\argument)^\dagger$ that sends each $f\c X \to T(Y \uplus X)$
  to $f^\dagger\c X \to T Y$ in such a way that certain established axioms of 
  iteration are satisfied~\cite{adamek11,goncharov17}. 
  
  Monad morphisms between Elgot monads are additionally required to preserve iteration: 
$\theta\comp f^{\istar\BBT} = (\theta\comp f)^{\istar\BBS}$ for $\theta\c\BBT\to\BBS$, $f\c X\to T(Y\uplus X)$.
\end{definition}
%
For a monad~$\BBT$, a map $f\c X\to TY$, called a \emph{Kleisli map},
is roughly to be regarded as a semantics of a program $\prog{p}$, with
$X$ as the semantics of the input, and $Y$ as the semantics of the
output.  For example, with $T$ being the \emph{maybe monad}
$(\argument)\uplus\{\bot\}$, we obtain semantics of programs as
partial functions. Let us record this example in more detail for
further reference.
\begin{example}[Maybe Monad $\BBM$]\label{exp:maybe}
The maybe monad is determined by the following data: 
$MX=X\uplus\{\bot\}$, the unit is the left injection $\inl\c X\to X\uplus\{\bot\}$
and given $f\c X\to Y\uplus\{\bot\}$, $f^\klstar$ is equal to the
copairing $[f,\inr]\c X\uplus\{\bot\}\to Y\uplus\{\bot\}$.

It follows by general considerations (enrichment of the category of
Kleisli maps over complete partial orders) that $\BBM$ is an Elgot
monad with the following iteration operator~$(\argument)^\natural$:
given $f\c X\to (Y\uplus X)\uplus\{\bot\}$, and $x_0\in X$, let
$x_0,x_1,\ldots$ be the longest (finite or infinite) sequence over $X$
constructed inductively in such a way that
$f(x_i) = \inl\inr x_{i+1}$. Now, $f^\natural(x_0) = \inr\bot$ if the
sequence is infinite or $f(x_i) = \inr\bot$ for some $i$, and
$f^\natural(x_0)=\inl y$ if for the last element of the sequence
$x_n$, which must exist, $f(x_n) = \inl\inl y$.

Other examples of Elgot monad can be consulted \eg\
in~\cite{goncharov17}.
\end{example}
The computational effect of \emph{hybridness} can also be captured by
a monad, called \emph{hybrid monad}~\cite{goncharov18,goncharov2019},
which we recall next (in a slightly different but equivalent form). To
that end, we also need to recall \emph{Minkowski addition} for subsets
of the set $\realpe$ of extended non-negative reals (see
Section~\ref{sec:prelim}): $A+B = \{a+b\mid a\in A\comma b\in B\}$,
e.g.\ $[a,b] + [c,d] = [a + c, b + d]$ and
$[a,b] + [c,d) = [a + c, b + d)$.
\begin{definition}[Hybrid Monad $\BBH$]
The \emph{hybrid monad} $\BBH$ is defined as follows.
\begin{itemize}
\item $H X = \sum_{I\in[0,\realp]} X^I \uplus  \sum_{I\in[0,\realpe\rrparenthesis} X^I$, \ie\
  it is a set of trajectories valued on $X$ and with the domain downclosed.
  For any $p =\inj \brks{I,e}\in HX$ with $\inj\in\{\inl\comma\inr\}$,
  let us use the notation $p_\dr = I$, $p_\ev = e$, the former being the duration
  of the trajectory and the latter the trajectory itself. Let also $\eps=\brks{\emptyset,\bang}$. 
\item $\eta(x) = \inl\brks{[0,0],\l t.\,x}$, \ie\ $\eta(x)$ is a
  trajectory of duration $0$ that returns $x$.
\item given $f\colon X\to HY$, we define $f^\klstar\colon HX\to HY$
  via the following clauses:
\begin{flalign*}
  \quad f^\klstar (\inl\brks{I,e}) =&\; \inj\brks{I+J,\l t.\,\ite{(f(e^t))_\ev^0}{t<d}{(f(e^d))_\ev^{t-d}}}\\
  &&\hspace{-15em} \text{if~~$I'=I=[0,d] \text{ for some } d, f(e^d) = \inj\,\brks{J,e'}$} \\
%
%
\quad f^\klstar (\inl\brks{I,e}) =&\; \inr\brks{I',\l t.\, (f(e^t))_\ev^0}
&&\hspace{-4em} \text{if~~$I'\neq I$}\\
\quad f^\klstar (\inr\brks{I,e}) =&\; \inr\brks{I',\l t.\, (f(e^t))_\ev^0}
\end{flalign*}
where $I'=\bigcup\,\bigl\{[0,t]\subseteq I \mid \forall s\in [0,t].\, f(e^s)\neq\inr\eps\bigr\}$
and $\inj\in\{\inl\comma\inr\}$.
\end{itemize}
\end{definition}
The definition of the hybrid monad $\BBH$ is somewhat intricate, so
let us complement it with some explanations (details and further
intuitions about the hybrid monad can also be consulted
in~\cite{goncharov18}).  The domain $HX$ constitutes three types of
trajectories representing different kinds of hybrid computation:
\begin{itemize}
  \item\emph{(closed) convergent}: $\inl\brks{[0,d],e}\in HX$ (e.g.\ instant termination $\eta(x)$);
  \item\emph{open divergent}: $\inr\brks{[0,d),e}\in HX$ (e.g.\
    instant divergence $\inr\eps$ or a trajectory $[0,\infty) \to X$
    which represents a computation that runs \emph{ad infinitum});
  \item\emph{closed divergent}: $\inr\brks{[0,d],e}\in HX$
    (representing computations that start to diverge \emph{precisely}
    after the time instant $d$).
\end{itemize}
The Kleisli lifting $f^\klstar$ works as follows: for a given
trajectory $\inj \brks{I,e}$, we first calculate the largest interval
$I'\subseteq I$ on which the trajectory $\lambda t \in I'. f(e^t)$
does not instantly diverge (\ie $f(e^t) \neq \inr \eps$) throughout,
hence $I'$ is either $[0,d']$ or $[0,d')$ for some $d'$.  Now, the
first clause in the definition of~$f^\star$ corresponds to the
successful composition scenario: the argument trajectory $\brks{I,e}$
is convergent, and composing~$f$ with $e$ as described in the
definition of $I'$ does not yield divergence all over $I$. In that
case, we essentially concatenate $\brks{I,e}$ with $f(e^d)$, the
latter being the trajectory computed by $f$ at the last point of~$e$.
The remaining two clauses correspond to various flavours of
divergence, including divergence of the input~($\inr\brks{I,e}$) and
divergences occurring along $f\comp e$. Incidentally, this explains
how closed divergent trajectories may arise: if $I'=[0,d']$ and $d'$
is properly smaller than $d$, this indicates that we diverge precisely
\emph{after} $d'$, which is possible e.g.\ if the program behind $f$
continuously checks a condition which did not fail up until $d'$.

\section{Deconstructing the Hybrid Monad}\label{sec:monads}

As mentioned in the introduction, in \cite{goncharov2019} we used
$\BBH$ for giving semantics to a \emph{functional} language~\hc{}
whose programs are interpreted as morphisms of type $X\to
HY$. Here, we are dealing with an \emph{imperative} language, which
from a semantic point of view amounts to fixing a type of
\emph{states} $S$, shared between all programs; the semantics
of a program is thus restricted to morphisms of type~$S\to HS$. As explained next, 
this allows us to make do with a simpler monad~$\BBH_S$, globally parametrized 
by~$S$. The new monad $\BBH_S$ has the property that $H_S S$ 
is naturally isomorphic to~$HS$. Apart from (relative to $\BBH$) simplicity, the 
new monad enjoys further benefits, specifically $\BBH_S$ 
is mathematically a better behaved structure, e.g.\ in contrast to~$\BBH$, 
Elgot iteration on $\BBH_S$ is constructed as a least fixed
point. Factoring the denotational semantics through $\BBH_S$ thus
allows us to bridge the gap to the operational semantics given in
Section~\ref{sec:syn_sem}, and faciliates the soundness and
adequacy proof in the forthcoming Section~\ref{sec:deno}.  

In order to
define $\BBH_S$, it is convenient to take a slightly broader
perspective. We will also need to make a detour through the topic of
ordered monoid modules with certain completeness properties so that we
can characterise iteration on $\BBH_S$ as a least fixed point.

\begin{definition}[Monoid Module, Generalized Writer Monad~\cite{goncharov2019}]\label{def:gen-write}
  Given a (not necessarily commutative) monoid $(\om, +,0)$, a \emph{monoid
    module} is a set $\mm$ equipped with a map
  $\later\c\om\times\mm\to\mm$ (monoid action), subject to the laws 
  $0\later e = e$, $(m + n)\later e = m\later (n\later e)$.

  Every monoid-module
  pair $(\om,\mm)$ induces a \emph{generalized
    writer monad} $\BBT=(T,\eta,(\argument)^\klstar)$ with
  $T=\om\times (\argument)\cup \mm$, $\eta_X(x) = \brks{0,x}$, and
\begin{align*}
\qquad f^\star(m,x) =\;& (m + n,y) &&\text{\quad where\qquad $m\in\om$, $x\in X$, $f(x) = \brks{n,y}\in\om\times Y$}&&\\*
f^\star(m,x) =\;& m\later e &&\text{\quad where\qquad $m\in\om$, $x\in X$, $f(x) = e\in\mm$}&&\\*
f^\star(e) =\;& e &&\text{\quad where\qquad $e\in\mm$}
\end{align*}
This generalizes the writer mo\-nad $(\mm=\emptyset)$
and the exception mo\-nad~${(\om=1)}$. 
\end{definition}
\begin{example}
  A simple motivating example of a monoid-module pair $(\om,\mm)$ is the pair
  $(\realp,\realpe)$ where the monoid operation is addition with $0$
  as the unit and the monoid action is also addition.
\end{example}

More specifically, we are interested in \emph{ordered monoids} and 
\emph{(conservatively) complete monoid modules}. These are defined as follows.
\begin{definition}[Ordered Monoids, (Conservatively) Complete Monoid Modules~\cite{diezel20}]
We call a monoid $(\om,0,+)$ an \emph{ordered monoid} if it is equipped with a partial 
order $\leq$, such that $0$ is the least element of this order and $+$ is right-monotone
(but not necessarily left-monotone).

An \emph{ordered $\om$-module} w.r.t.\ an ordered monoid $(\om, +, 0, \le)$, is an $\om$-module $(\mm,\later)$ together with a partial order $\appr$ 
and a least element~$\bot$, such that $\later$ is monotone on the right and $(- \later \bot)$ is monotone, i.e.
\begin{align*}
\infrule{}{\bot\appr x}&&
\infrule{x \appr y}{a \later x \appr a \later y}&&
\infrule{a \le b}{a \later \bot \appr b \later \bot}&&
\end{align*}
We call the last property \emph{restricted left monotonicity}.

An ordered $\om$-module is \emph{($\omega$-)complete} 
if for every $\omega$-chain $s_1\appr s_2\appr\ldots$ on~$\mm$ there is a least upper 
bound $\lub_i s_i$ and $\later$ is continuous on the right, i.e.\
\begin{align*}
\infrule{}{\forall i.\, s_i \appr \lub_i s_i}&&
\infrule{\forall i.\,s_i \appr x}{\lub_i s_i \appr x}&&
\infrule{}{a \later \lub_i s_i\appr\lub_i a \later s_i}
\end{align*}
(the law $\lub_i a \later s_i \appr a \later \lub_i s_i$ is derivable). 
Such an $\om$-module is \emph{conservatively complete} if additionally 
for every $\omega$-chain $a_1\appr a_2\appr\ldots$ in $\om$, such that the least upper 
bound $\bigor_i a_i$ exists, $\bigl(\bigor_i a_i\bigr)\later\bot = \lub_i~ a_i\later\bot$. 

A homomorphism $h\c\mm\to\mmm$ of (conservatively) complete monoid
$\om$-modules is required to be monotone and structure-preserving in
the following sense: $h(\bot)=\bot$, $h(a\later x) = a\later h(x)$,
$h(\lub_i x_i) = \lub_i h(x_i)$.
\end{definition}
The completeness requirement for $\om$-modules has a standard
motivation coming from domain theory, where $\appr$ is regarded as
an \emph{information order} and completeness is needed to ensure that
the relevant semantic domain can accommodate infinite behaviours. The
conservativity requirement additionally ensures that the least upper
bounds, which exist in~$\om$ agree with those in $\mm$.  Our main
example is as follows (we will use it for building $\BBH_S$ and its
iteration operator).
\begin{definition}[Monoid Module of Trajectories]\label{defn:mon}
The ordered monoid of \emph{finite open trajectories}                              
$\bigl(\fot_S, \,\tconc\,, \brks{\emptyset,\bang}, \le\bigr)$ over a given  
set $S$, is defined as follows: $\fot_S = \sum_{I\in [0,\realp)} S^I$, the unit is the empty trajectory $\eps = \brks{\emptyset,!}$; summation is
concatenation of trajectories $\,\tconc\,$, defined as follows:                         
\begin{align*}
\brks{[0,d_1),e_1} \tconc \brks{[0,d_2),e_2} =&\; \brks{[0,d_1 + d_2), \lambda t.\, \ite{e_1^t}{t < d_1}{e_2^{t - d_1}}}.
\end{align*}
The relation $\le$ is defined as follows: 
$\brks{[0,d_1),e_1} \le \brks{[0,d_2),e_2}$ if $d_1 \le d_2$ and $e_1^t = e_2^t$ for every $t\in[0,d_1)$. 
%
We can additionally consider both sets $\sum_{I\in [0,\realpe)} S^I$
and $\sum_{I\in [0,\realpe\rrparenthesis} S^I$ as $\fot_S$-modules, by
defining the monoid action $\later$ also as concatenation of
trajectories and by equipping these sets with the order $\sqsubseteq$:
$\brks{I_1,e_1} \sqsubseteq \brks{I_2,e_2}$ if $I_1 \subseteq I_2$ and
$e_1^t = e_2^t$ for all $t \in I_1$.
\end{definition}
Consider the following functors:
\begin{align}
\label{eq:H-def-pre}
H'_SX =&\; \sum_{I\in [0,\realp)} S^I\times X\cup \sum_{I\in [0,\realpe)} S^I\\
\label{eq:H-def}
H_SX  =&\; \sum_{I\in [0,\realp)} S^I\times X\cup \sum_{I\in [0,\realpe\rrparenthesis} S^I
\end{align}
Both of them extend to monads $\BBH'_S$ and $\BBH_S$ as they are
instances of Definition~\ref{def:gen-write}.  Moreover, it is
laborious but straightforward to prove that both $H'_SX$ and $H_SX$ are
conservatively complete $\fot_S$-modules on $X$~\cite{diezel20}, i.e.\
conservatively complete $\fot_S$-modules, equipped with distinguished
maps $\eta\c X\to H'_SX$, $\eta\c X\to H_SX$. In each case $\eta$
sends $x\in X$ to $\brks{\eps,x}$. The partial order on $H'_SX$ (which
we will use for obtaining the least upper bound of a certain sequence
of approximations) is given by the clauses below and relies on the
previous order $\leq$ on trajectories:
\begin{align*}
\infrule{}{\brks{\brks{I,e},x}\appr\brks{\brks{I,e},x}}&&
\infrule{\brks{I,e}\leq\brks{I',e'}}{\brks{I,e}\appr\brks{\brks{I',e'},x}}&&
\infrule{\brks{I,e}\leq\brks{I',e'}}{\brks{I,e}\appr\brks{I',e'}}&&
\end{align*}
The monad given by~\eqref{eq:H-def-pre} admits a sharp
characterization, which is an instance of a general
result~\cite{diezel20}. In more detail,
\begin{proposition}\label{prop:free}
The pair $(H'_SX,\eta)$ is a \emph{free conservatively complete $\fot_S$-module} 
on~$X$, i.e.\ for every conservatively 
complete $\fot_S$-module~$\mm$ and a map $f\c X\to\mm$, there is unique homomorphism
$\hat f\c {H'_SX\to\mm}$ such that~${\hat f\comp\eta = f}$.
\end{proposition}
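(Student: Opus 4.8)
The plan is to verify directly that $(H'_SX,\eta)$ satisfies the universal property: given a conservatively complete $\fot_S$-module $\mm$ and a map $f\c X\to\mm$, one must construct a homomorphism $\hat f\c H'_SX\to\mm$ with $\hat f\comp\eta=f$ and show it is the only such homomorphism. Recall that an element of $H'_SX$ is either a pair $\brks{\brks{[0,d),e},x}$ consisting of a finite open trajectory and an element $x\in X$, or a ``pure module element'' $\brks{I,e}$ with $I\in[0,\realpe)$; under the generalized-writer view of Definition~\ref{def:gen-write}, the first kind is $\brks{m,x}$ with $m\in\fot_S$ and the second is an element of the $\fot_S$-module $\sum_{I\in[0,\realpe)}S^I$ on which $H'_S$ is built. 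Since $\hat f$ must be structure-preserving, its value is forced: $\hat f(\brks{m,x}) = m\later \hat f(\brks{\eps,x}) = m\later f(x)$, using $\hat f\comp\eta = f$ and $\eta(x)=\brks{\eps,x}$; and on a pure element $\brks{I,e}$ with $I=[0,d)$, $d\in\realp$, again $\hat f(\brks{I,e}) = \brks{I,e}\later\bot$ (observing that in the module $\sum_{I\in[0,\realpe)}S^I$ every element with bounded-real domain is $m\later\bot$ for the corresponding finite open trajectory $m$, and $\bot=\brks{\emptyset,\bang}$ is the zero-length-domain element), so $\hat f(\brks{I,e}) = \brks{I,e}\later\bot$ in $\mm$; finally for $I=[0,\infty)$ we are forced to take $\hat f(\brks{[0,\infty),e}) = \lub_i\, \brks{[0,i),e\restriction_{[0,i)}}\later\bot$, since $\brks{[0,\infty),e}$ is the least upper bound in $\sum_{I\in[0,\realpe)}S^I$ of the $\omega$-chain of its bounded restrictions, and $\hat f$ must be continuous. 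This pins down $\hat f$ uniquely, giving the uniqueness half essentially for free.

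For existence, I would then \emph{define} $\hat f$ by exactly these clauses and check it is a well-defined homomorphism of conservatively complete $\fot_S$-modules, i.e.\ that it preserves $\bot$, the action $\later$, and $\omega$-suprema, and that it is monotone. Preservation of $\bot$ is immediate ($\bot$ in $H'_SX$ is $\brks{\emptyset,\bang}=\eps\later\bot$, sent to $\eps\later\bot=\bot$ in $\mm$). For preservation of the action, one distinguishes cases according to whether the element acted upon is a pair $\brks{m,x}$ or a pure trajectory, and uses the module law $(m+n)\later e = m\later(n\later e)$ together with associativity of trajectory concatenation; the slightly delicate subcase is when acting by a finite trajectory $m$ on a pure \emph{infinite} trajectory $\brks{[0,\infty),e}$, where one needs that $m\later\lub_i(\ldots) = \lub_i(m\later(\ldots))$, which is exactly right-continuity of $\later$ in the definition of a complete module. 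Preservation of $\omega$-suprema is the step where the precise shape of the order $\appr$ on $H'_SX$ matters: an $\omega$-chain in $H'_SX$ is (after discarding a finite prefix) either eventually of the form $\brks{\brks{I_i,e_i},x}$ with the $\brks{I_i,e_i}$ increasing and $x$ fixed, or a chain of pure trajectories, or a chain whose supremum is a pure \emph{infinite} trajectory; in each case one rewrites the chain as $\brks{I_i,e_i}\later c$ for a fixed $c$ ($c=f(x)$ or $c=\bot$) and invokes, respectively, right-continuity of $\later$ and — crucially — \emph{conservative} completeness, which is precisely what guarantees $(\bigor_i\brks{I_i,e_i})\later\bot = \lub_i\brks{I_i,e_i}\later\bot$ when the domains grow without bound to $[0,\infty)$.

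The main obstacle I anticipate is the bookkeeping around the ``pure'' summand $\sum_{I\in[0,\realpe)}S^I$: one must justify cleanly that every such trajectory is canonically expressible via the action of a finite open trajectory on $\bot$ together with, in the infinite case, an $\omega$-supremum, so that the forced definition of $\hat f$ is both total and unambiguous, and then that this representation is respected by $\later$ and by suprema — in other words, that $\sum_{I\in[0,\realpe)}S^I$ really is the free conservatively complete $\fot_S$-module on the \emph{empty} set (equivalently, on $\bot$ alone). All of this is an instance of the general free-construction theorem of~\cite{diezel20}, so in the write-up I would either cite that result directly for the module-theoretic core and only spell out the identification of $H'_SX$ with the relevant free object, or, if a self-contained argument is wanted, carry out the case analysis above; either way the content is routine once the canonical-form observation for the pure summand is in place.
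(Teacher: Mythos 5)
Your proposal is correct in outline, but it takes a genuinely different route from the paper: the paper offers no argument at all for this proposition beyond observing that it is an instance of the general free-construction theorem of~\cite{diezel20} (no proof appears in the appendix), whereas you sketch a self-contained direct verification of the universal property. Your forced definition of $\hat f$ is right on all three kinds of elements ($\hat f\brks{m,x}=m\later f(x)$, $\hat f\brks{I,e}=\brks{I,e}\later\bot$ for finite $I$, and a supremum of truncations for $I=[0,\infty)$), and you correctly identify that the work lies in checking preservation of the action and of $\omega$-suprema. One technical point is misplaced, though: conservative completeness is \emph{vacuous} for chains whose domains grow without bound, because the join $\bigor_i\brks{I_i,e_i}$ then simply does not exist in $\fot_S$ (which contains only finite open trajectories), so plain completeness of $\mm$ is all that the infinite-trajectory clause needs. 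Where conservativity actually bites is the opposite case: a chain $\brks{[0,d_i),e_i}$ with $d_i\nearrow d<\infty$ has a join in $\fot_S$, and well-definedness of $\hat f$ on the finite pure trajectory $\brks{[0,d),e}$ (equivalently, continuity of $\hat f$ on such chains) is exactly the identity $\bigl(\bigor_i\brks{I_i,e_i}\bigr)\later\bot=\lub_i\brks{I_i,e_i}\later\bot$. A second, cosmetic point: in the order on $H'_SX$ the pair elements $\brks{\brks{I,e},x}$ are only related to one another by equality, so a chain that reaches the left summand is eventually constant; your ``increasing $\brks{I_i,e_i}$ with $x$ fixed'' case degenerates accordingly, which only simplifies the supremum check. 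With these two adjustments your direct argument goes through and is a reasonable elementary substitute for the citation; the citation buys brevity and reuse of the general theory, your version buys a concrete picture of why $H'_SX$ is the least conservatively complete module generated by $X$.
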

Intuitively, \Cref{prop:free} ensures that $H'_SX$ is a \emph{least}
conservatively complete $\fot_S$-module generated by $X$. This
characterization entails a construction of an iteration operator
on~$\BBH'_S$ as a least fixpoint. This, in fact, also transfers
to~$\BBH_S$ (as detailed in the proof of the following theorem).
\begin{theorem}\label{thm:HS-Elgot}
Both $\BBH'_S$ and $\BBH_S$ are Elgot monads, for which $f^\istar$ is computed
as a least fixpoint of $\omega$-continuous endomaps $g\mto [\eta,g]^\klstar\comp f$
over the function spaces $X\to H'_S Y$ and $X\to H_S Y$ correspondingly.
\end{theorem}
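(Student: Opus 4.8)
The plan is to establish the claim in two stages: first for $\BBH'_S$, where the free-module characterization of Proposition~\ref{prop:free} does the heavy lifting, and then transfer the result to $\BBH_S$ by relating the two monads. For $\BBH'_S$: given $f\c X\to H'_S(Y\uplus X)$, I would fix $Y$ and consider the endomap $\Phi_f\c (X\to H'_S Y)\to (X\to H'_S Y)$ defined by $\Phi_f(g) = [\eta,g]^\klstar\comp f$. The function space $X\to H'_S Y$ inherits a pointwise partial order with least element $\lambda x.\,\brks{\emptyset,\bang}$ (the everywhere-instantly-diverging map), and since each $H'_S Y$ is an $\omega$-complete $\fot_S$-module, pointwise suprema of $\omega$-chains exist. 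I would check that $\Phi_f$ is $\omega$-continuous: $\eta$ and $g$ are fixed, so $[\eta,g]$ is a fixed map $Y\uplus X\to H'_S Y$; its Kleisli extension $[\eta,g]^\klstar$ is a module homomorphism out of the free module $H'_S X'$ (here $X' = Y\uplus X$), hence monotone and $\omega$-continuous by definition of module homomorphism; precomposition with the fixed map $f$ preserves continuity; and the whole thing is continuous in $g$ because Kleisli extension is built from $\later$ and $\lub$, both continuous on the right. So $\Phi_f$ has a least fixpoint $f^\istar = \lub_n \Phi_f^n(\bot)$ by Kleene's theorem, and I would \emph{define} the iteration operator this way.

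The second task is to verify that this least-fixpoint operator satisfies the Elgot iteration axioms (fixpoint, naturality/parameter identities, codiagonal, uniformity). Here I would lean on the general result of~\cite{diezel20}: the free conservatively complete $\fot_S$-module construction yields an Elgot monad whose iteration is exactly this least fixpoint — this is precisely the payoff of the phrase ``instance of a general result'' attached to Proposition~\ref{prop:free}. So rather than re-deriving each axiom, I would cite~\cite{diezel20} for $\BBH'_S$ and spend the real work on the transfer to $\BBH_S$. The point is that $H_S X$ differs from $H'_S X$ only in the module component: $\sum_{I\in[0,\realpe\rrparenthesis} S^I$ versus $\sum_{I\in[0,\realpe)} S^I$ — that is, $H_S$ additionally admits \emph{closed} divergent trajectories of the form $\brks{[0,d],e}$. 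I would exhibit $\BBH_S$ as a retract of (a suitable variant of) $\BBH'_S$, or more directly argue that $H_S X$ is also a conservatively complete $\fot_S$-module on $X$ — which the excerpt already asserts is ``laborious but straightforward'' — and that the inclusion/projection maps between the two module structures are module homomorphisms commuting with $\eta$ and with Kleisli extension. Then the least fixpoint computing $f^\istar$ over $X\to H_S Y$ is obtained from the one over $X\to H'_S Y$ by this homomorphism, and since monad morphisms between Elgot monads preserve iteration (Definition~\ref{def:elgot}), the Elgot axioms descend to $\BBH_S$.

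The step I expect to be the main obstacle is the transfer to $\BBH_S$, specifically showing that the closed-divergent trajectories are handled coherently: the order $\sqsubseteq$ on $\sum_{I\in[0,\realpe\rrparenthesis}S^I$ must make it $\omega$-complete (a chain of open divergent trajectories can have a closed divergent — or a genuinely open — supremum, and one must check the least upper bound is computed correctly in each case), and conservative completeness must hold, i.e.\ $\bigl(\bigor_i a_i\bigr)\later\bot = \lub_i a_i\later\bot$ where $\bot = \brks{[0,0],\lambda t.\,s_0}$ or $\brks{\emptyset,\bang}$ depending on the exact choice — this is where the distinction between $[0,\realpe)$ and $[0,\realpe\rrparenthesis$ bites. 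Once that module structure is nailed down, the identification $H_S S \cong H S$ and the fact that iteration is a module-homomorphic image of the $\BBH'_S$ iteration make the Elgot axioms essentially automatic, so I would front-load the effort onto verifying the three (conservative) completeness diagrams for the $H_S X$ module and deriving $\omega$-continuity of $\Phi_f$ from them.
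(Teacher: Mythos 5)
Your treatment of $\BBH'_S$ matches the paper: Proposition~\ref{prop:free} yields a cpo-enrichment of the Kleisli category (this is~\cite[Theorem~7]{diezel20}), from which Elgotness with iteration as the least fixpoint of $g\mto[\eta,g]^\klstar\comp f$ follows by a general argument. That half is fine.

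The transfer to $\BBH_S$ is where your argument has a genuine gap. You propose to carry the Elgot structure across via a monad morphism (or retraction) between $\BBH'_S$ and $\BBH_S$, invoking the clause of Definition~\ref{def:elgot} that monad morphisms between Elgot monads preserve iteration. But that clause is a \emph{condition imposed on morphisms between two monads already known to be Elgot}; it is not a mechanism for inducing an Elgot structure on the target. If you instead set $f^\istar_{\BBH_S} \coloneqq r\comp(e\comp f)^{\istar_{\BBH'_S}}$ for a section--retraction pair $(e,r)$, the fixpoint and naturality laws do descend, but the codiagonal law produces a spurious $e\comp r$ in the middle of the computation and does not descend along an arbitrary retraction; you would have to verify it by hand, which is exactly the work you are trying to avoid. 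Note also that $H_SX$ is a conservatively complete $\fot_S$-module but \emph{not} the free one (the closed divergent trajectories $\brks{[0,d],e}$ are not suprema of chains of open ones under $\sqsubseteq$ --- the chain $\brks{[0,d_i),e_i}$ with $d_i\to d$ has $\brks{[0,d),e}$ as its \emph{least} upper bound), so Proposition~\ref{prop:free} and hence~\cite{diezel20} do not apply to it directly. The paper's route is different and cleaner: it observes that $\BBH_S$ arises from $\BBH'_S$ by applying the exception monad transformer $\BBT\mto\BBT(\argument\uplus E)$ with $E=\sum_{I\in[0,\realp]}S^I$ (the closed divergent trajectories are exactly the added exceptions), and this transformer is known to preserve Elgotness and the least-fixpoint description of iteration~\cite[Theorem~7.1]{GoncharovSchroderEtAl18}. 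You correctly identified that the closed divergent trajectories are the only new ingredient, but without packaging them as exceptions (or otherwise verifying the codiagonal and uniformity laws for $\BBH_S$ directly) the second half of your proof does not go through.
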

In this section's remainder, we formally connect the monad $\BBH_S$
with the monad~$\BBH$, the latter introduced in our previous work and
used for providing a semantics to the functional language \hc{}. In
the following section we provide a semantics for the current
imperative language via the monad $\BBH_S$. Specifically, in this
section we will show how to build $\BBH$ from~$\BBH_S$ by considering
additional semantic ingredients on top of the latter.

Let us subsequently write $\eta^S$, $(\argument)_S^\klstar$ and
$(\argument)_S^\istar$ for the unit, the Kleisli lifting and the Elgot
iteration of $\BBH_S$.  Note that $S,X\mto\BBH_S X$ is a
\emph{parametrized monad} in the sense of Uustalu~\cite{uustalu03}, in
particular $H_S$ is functorial in $S$  and for every
${f\c S\to S'}$,~${H_f\c H_S\to H_{S'}}$ is a monad morphism.

Then we introduce the following technical natural transformations 
$\iota\c H_SX\to X\uplus (S\uplus \{\bot\})$ and $\tau\c H_{S\uplus Y} X\to H_SX$.
First, let us define $\iota$:
\begin{align*}
\iota(I,e,x) =&\; 
  \left\{\begin{array}{ll}
        \inr\inl e^0, & \text{if~~} I\neq\emptyset\\
        \inl x, & \text{otherwise}
        \end{array}\right. &
\iota(I,e) = &\;
  \left\{\begin{array}{ll}
        \inr\inl e^0, & \text{if~~} I\neq\emptyset\\
        \inr\inr\bot, & \text{otherwise}
        \end{array}\right.
\end{align*}
In words: $\iota$ returns the initial point for non-zero length
trajectories, and otherwise returns either an accompanying value from
$X$ or $\bot$ depending on that if the given trajectory is convergent
or divergent. The functor $(\argument)\uplus E$ for every $E$ extends
to a monad, called the \emph{exception monad}. The following is easy
to show for $\iota$.
\begin{lemma}\label{lem:iota}
For every $S$, $\iota\c H_S\to (\argument)\uplus (S\uplus \{\bot\})$ is a monad
morphism.
%
%
%
%
\end{lemma}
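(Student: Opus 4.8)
The goal is to show that $\iota\c H_S \to (\argument)\uplus(S\uplus\{\bot\})$ is a monad morphism, i.e.\ that it commutes with the units and with Kleisli lifting. I will abbreviate the exception monad $(\argument)\uplus(S\uplus\{\bot\})$ by $E(\argument)$, writing $\eta^E$ for its unit (which is $\inl$) and $(\argument)^{\klstar E}$ for its Kleisli lifting (which, given $g\c X\to E Y$, is the copairing $[g,\inr]\c X\uplus(S\uplus\{\bot\})\to Y\uplus(S\uplus\{\bot\})$, leaving the exception part untouched). Naturality of $\iota$ is a routine check from the case distinction in its definition, so the substance is the two monad-morphism equations.

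The unit law $\iota\comp\eta^S = \eta^E$ is immediate: $\eta^S(x) = \brks{\eps,x} = \brks{\emptyset,\bang,x}$, which has empty domain, so the first clause of $\iota$ on convergent trajectories gives $\inl x = \eta^E(x)$. For the Kleisli law I must show, for any $f\c X\to H_S Y$, that $\iota\comp f_S^\klstar = (\iota\comp f)^{\klstar E}\comp\iota$ as maps $H_S X\to E Y$. I would argue by cases on the argument $p\in H_S X$, following exactly the three clauses in Definition~\ref{def:gen-write} instantiated to the generalized writer monad $\BBH_S$ (with monoid $\fot_S$ and module the trajectory module of~\eqref{eq:H-def}). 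The key observation driving every case is that $\iota$ only ever inspects whether the domain of its argument trajectory is empty, and if not, only the value at time~$0$; and that prepending a trajectory via concatenation $\tconc$ or via the action $\later$ never changes emptiness-of-domain-and-value-at-$0$ unless the left operand is itself the empty trajectory $\eps$.

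Concretely: if $p = \brks{\emptyset,\bang,x}$ has empty domain (the degenerate convergent case), then $\iota(p) = \inl x$, and $f_S^\klstar(p) = f(x)$, so the left side is $\iota(f(x))$; on the right, $(\iota\comp f)^{\klstar E}(\inl x) = \iota(f(x))$ as well. If $p = \brks{I,e,x}$ with $I\neq\emptyset$ (nondegenerate convergent case), then $\iota(p) = \inr\inl e^0$, and the right side is $(\iota\comp f)^{\klstar E}(\inr\inl e^0) = \inr\inl e^0$ since the exception lifting is the identity on the exception summand; on the left, $f_S^\klstar(p)$ is $I\tconc J$-ish — either $\brks{I\tconc J, \ldots}$ or $I\later e'$ — but in both subcases the domain is nonempty (it contains $I\neq\emptyset$) and the value at $0$ is $e^0$, so $\iota(f_S^\klstar(p)) = \inr\inl e^0$ too. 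The divergent case $p = \brks{I,e}$ with $I\neq\emptyset$ is handled identically (both sides give $\inr\inl e^0$), and the divergent case $p=\brks{\emptyset,\bang} = \eps$ gives $\iota(p)=\inr\inr\bot$ on the left, and since $f_S^\klstar(\eps) = \eps$ by the third clause of Definition~\ref{def:gen-write}, also $\iota(f_S^\klstar(\eps)) = \inr\inr\bot$, matching $(\iota\comp f)^{\klstar E}(\inr\inr\bot) = \inr\inr\bot$.

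The only mildly delicate point — and the one I would write out carefully rather than waving at — is the nondegenerate convergent subcase where $f_S^\klstar(p)$ lands in the module part $I\later e'$ (i.e.\ $f(e^d)\in H_S$ is a divergent trajectory): I need that concatenating a nonempty $I$ on the left still yields a nonempty domain with value $e^0$ at time~$0$, which holds because $t<d_1\le\ldots$ picks out $e_1^t$ in the definition of $\tconc$, and for $t=0$, $0<d_1$ since $I=[0,d_1)$ with $d_1>0$. Everything else is bookkeeping. So the main (minor) obstacle is just being disciplined about which of the three writer-monad clauses fires in each of the two summand-cases of $H_S$, and checking the domain/value-at-$0$ invariant of $\iota$ survives concatenation; there is no conceptual difficulty.
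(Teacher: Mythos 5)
Your proof is correct and is exactly the argument the paper has in mind: the paper itself omits the proof of this lemma (asserting only that it is ``easy to show''), and your case analysis over the clauses of the generalized writer monad's Kleisli lifting, tracking the emptiness-of-domain/value-at-$0$ invariant through concatenation, is the natural fill-in and mirrors in style the paper's explicit case-distinction proof of the analogous Lemma~\ref{lem:tau}. The unit law and all four cases check out, including the one delicate subcase you rightly single out where $f_S^\klstar$ lands in the module part.
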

Next we define $\tau\c H_{S\uplus Y} X\to H_SX$:
\begin{align*}
\tau(I,e,x) = &\;
  \left\{\begin{array}{ll}
        \brks{I,e,x}, & \text{if~~} I = I'\\
        \brks{I',e'}, & \text{otherwise}
        \end{array}\right. &
\tau(I,e) = &\;\brks{I',e'}
\end{align*}
where $\brks{I',e'}$ is the largest such trajectory that for all $t\in I'$, $e^t = \inl e'^{t}$.
\begin{lemma}\label{lem:tau}
For all $S$ and $Y$, $\tau\c H_{ S\uplus  Y} \to H_S$ is a monad morphism.
\end{lemma}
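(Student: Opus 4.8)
The plan is to verify, for fixed $S$ and $Y$, the three conditions that make $\tau$ a monad morphism between the generalized writer monads $\BBH_{S\uplus Y}$ and $\BBH_S$: naturality of $\tau$ in $X$, compatibility with the units, and compatibility with Kleisli lifting. Naturality and the unit law are routine. Every defining clause of $\tau$ only reinterprets a trajectory along $\inl^{-1}$ on the largest initial downset $I'$ on which it is $\inl$-valued, possibly discarding the accompanying $X$-value; this visibly commutes with postcomposing $H_{\argument}g$ for any $g\c X\to X'$, so $\tau$ is natural in $X$. For the unit, $\eta^{S\uplus Y}(x)=\brks{\emptyset,\bang,x}$ already has $I=\emptyset=I'$, hence $\tau(\eta^{S\uplus Y}(x))=\brks{\emptyset,\bang,x}=\eta^S(x)$.

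The substance is the Kleisli law. Fixing $f\c X\to H_{S\uplus Y}Z$, I would show $\tau\comp f^{\klstar}_{S\uplus Y}=(\tau\comp f)^{\klstar}_S\comp\tau$ pointwise, and the organizing device is a small distributivity property of the truncation underlying $\tau$: if a trajectory $a$ over $S\uplus Y$ is entirely $\inl$-valued, then truncating $a\tconc b$ yields $a\tconc(\text{truncation of }b)$ with values reinterpreted along $\inl^{-1}$, whereas if $a$ is not entirely $\inl$-valued, truncating $a\tconc b$ yields just the truncation of $a$; this holds uniformly whether $b$ is a finite open trajectory or a module element, the only delicate point being the open/closed status of the resulting downset, which is checked to land in $[0,\realpe\rrparenthesis$ in all cases. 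Granting this, I case split on $p\in H_{S\uplus Y}X$. If $p=\brks{I,e}$ is in the module summand, the third Kleisli clause of the generalized writer monad gives $f^{\klstar}_{S\uplus Y}(p)=p$, and dually $\tau(p)=\brks{I',e'}$ is in the module summand of $H_SX$, so $(\tau\comp f)^{\klstar}_S(\tau p)=\tau(p)$; both sides collapse to $\brks{I',e'}$. The same collapse occurs when $p=\brks{I,e,x}$ is convergent but $e$ is not entirely $\inl$-valued: then $\tau(p)=\brks{I',e'}$ is again in the module summand, whereas $f^{\klstar}_{S\uplus Y}(p)$ is a concatenation of $\brks{I,e}$ with a further trajectory, whose truncation by the second case of the distributivity property is the truncation of $e$, namely $\brks{I',e'}$. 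The remaining case is $p=\brks{I,e,x}$ convergent with $e$ entirely $\inl$-valued, so $\tau(p)=\brks{I,e',x}$ stays convergent; here I sub-split on $f(x)$: if $f(x)=\brks{J,d,z}$ is convergent with $d$ entirely $\inl$-valued, both sides give $\brks{I\tconc J,e'\tconc d',z}$ (via the first Kleisli clause on the right, the first case of distributivity on the left); if $f(x)=\brks{J,d,z}$ but $d$ leaves $\inl S$, then $(\tau\comp f)(x)=\brks{J',d'}$ is a module element and both sides give the module action $\brks{I,e'}\tconc\brks{J',d'}$; and if $f(x)=\brks{K,c}$ already lies in the module summand, both sides give $\brks{I,e'}\tconc\brks{K',c'}$, where $\brks{J',d'}$, resp.\ $\brks{K',c'}$, is the $\inl$-truncation of $f(x)$. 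In every case the two sides agree.

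I expect the main obstacle to be bookkeeping rather than conceptual: correctly tracking the open/closed boundary of the truncated downset $I'$ as trajectories are concatenated, and lining up the three clauses of the generalized writer Kleisli lifting on the $H_S$ side against the three possible shapes of $\tau$'s output on the $H_{S\uplus Y}$ side. Proving the distributivity property of truncation over $\tconc$ up front, before the case analysis, is what keeps the argument manageable, and it is the step I would write out in full detail.
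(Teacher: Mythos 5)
Your proposal is correct and follows essentially the same route as the paper's proof: the unit law via $\tau(\eps,x)=\brks{\eps,x}$, and the Kleisli law by the same case split on whether $p$ lies in the module summand, whether $e$ stays $\inl$-valued on all of $I$, and then on the shape of $f(x)$, reaching the same conclusions in each case. The only difference is presentational — you factor out the distributivity of $\inl$-truncation over $\tconc$ as an explicit up-front lemma, whereas the paper applies that reasoning inline within each case.
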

We now arrive at the main result of this section.
\begin{theorem}\label{thm:H-HS}
The correspondence $S\mto H_S S$ extends to an Elgot monad as follows: 
\begin{align*}
\eta(x\in S) =&\; \eta^S(x),\\
(f\c X\to H_S S)^\klstar =&\; \bigl(H_XX\xto{H_{\iota'\comp f} \id} H_{S\uplus\{\bot\}} X\xto{\tau} H_SX\xto{f^\klstar_S} H_SS\bigr),\\
(f\c X\to H_{S\uplus X} (S\uplus X))^\istar =&\; \bigl(X\xto{\!f_{S\uplus X}^\istar\!} H_{S\uplus X} S\xto{\!H_{[\inl, (\iota'\comp f)^\natural]}\id} 
H_{S\uplus \{\bot\}} S\xto{\!\tau\!} H_SS\bigr).
%
\end{align*}
where $\iota' = [\inl,\id]\comp\iota\c H_SS\to S\uplus\{\bot\}$ and 
$(\argument)^\natural\c (X\to (S\uplus X)\uplus\{\bot\})\to (X\to S\uplus\{\bot\})$
is the iteration operator of the maybe-monad $(\argument)\uplus\{\bot\}$ (as in Example~\ref{exp:maybe}).
Moreover, thus defined monad is isomorphic to $\BBH$. 
\end{theorem}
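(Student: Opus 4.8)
The plan is to prove the statement in three stages: first exhibit the bijection $H_S S \cong HS$ at the level of the underlying sets and check it is natural in $S$; second, transport the monad and Elgot structure of $\BBH$ along this bijection and verify it coincides with the explicit formulas given in the statement; third, use that transported structure together with Theorem~\ref{thm:HS-Elgot} and the monad-morphism lemmas (\Cref{lem:iota,lem:tau}) to see that the formulas do define an Elgot monad. I expect the cleanest route is actually to \emph{define} the structure on $S\mto H_SS$ by the displayed formulas, then build the isomorphism to $\BBH$ last, since the formulas are the working object for the soundness/adequacy proof in \Cref{sec:deno}.

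First I would pin down the isomorphism. Unfolding \eqref{eq:H-def}, $H_S S = \sum_{I\in[0,\realp)} S^I\times S \cup \sum_{I\in[0,\realpe\rrparenthesis} S^I$, whereas $HS = \sum_{I\in[0,\realp]} S^I \uplus \sum_{I\in[0,\realpe\rrparenthesis}S^I$. The key observation is that a half-open trajectory $\brks{[0,d),e}$ together with an extra point $x\in S$ is exactly a closed trajectory $\brks{[0,d],e']$ with $e'\restriction[0,d)=e$ and $e'^d = x$; the empty half-open trajectory paired with $x$ corresponds to $\eta(x)$-style $\brks{[0,0],\l t.x}$. On the divergent summand the two descriptions are literally identical. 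So define $\kappa_S\c H_SS\to HS$ by $\kappa_S\brks{[0,d),e,x} = \inl\brks{[0,d],e\cup\{d\mapsto x\}}$ and $\kappa_S = \id$ on the divergent part; this is a bijection, with obvious inverse, and naturality in $S$ (for $f\c S\to S'$, compatibility with $H_f$ on the source and $Hf$ on the target) is a routine check on each summand. I would also note $\kappa_S(\eta^S x) = \inl\brks{[0,0],\l t.x} = \eta^{\BBH}(x)$, matching the first displayed clause.

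Next I would verify the Kleisli lifting and iteration formulas. For Kleisli lifting: given $f\c X\to H_SS$, the claim is $\kappa_S\comp(f^\klstar_{\text{new}}) = (\kappa_S\comp f\text{ suitably reindexed})^{\klstar\BBH}\comp\kappa_X$ — but cleaner is to check the displayed composite $H_SX \xto{H_{\iota'\comp f}\id} H_{S\uplus\{\bot\}}X \xto{\tau} H_SX \xto{f^\klstar_S} H_SS$ transports, under $\kappa$, to the three-clause Kleisli lifting of $\BBH$. The middle map $\tau$ implements exactly the ``cut the trajectory at the first point where $f\comp e$ signals divergence'' operation, computing the largest subinterval $I'$; composing with $\iota'\comp f$ first records, along the input trajectory $e$ over $S$, whether $f$ applied pointwise yields a genuine value in $S$ or the divergence token $\bot$; this is precisely the $I' = \bigcup\{[0,t]\subseteq I\mid \forall s.\,f(e^s)\neq\inr\eps\}$ computation in the definition of~$\BBH$. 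Then $f^\klstar_S$ of the generalized writer monad $\BBH_S$ does the concatenation with $f$ evaluated at the endpoint, reproducing the first clause of $f^{\klstar\BBH}$. The monad laws for the new structure then follow either by this transport (since $\BBH$ is already a monad) or directly from \Cref{lem:iota,lem:tau} plus the monad laws of $\BBH_S$. For iteration, the argument is parallel: $f^\istar_{S\uplus X}$ gives a trajectory over $S\uplus X$ via Theorem~\ref{thm:HS-Elgot} (as a least fixpoint), and post-composing with $H_{[\inl,(\iota'\comp f)^\natural]}\id$ followed by $\tau$ resolves the ``loop-back'' coordinates using the maybe-monad iteration $(\argument)^\natural$ from \Cref{exp:maybe} to decide, at each $X$-labelled point, whether the loop eventually exits (feeding back an $S$-value) or diverges (feeding back $\bot$, which $\tau$ then truncates); one checks this matches the known iteration operator of $\BBH$ from \cite{goncharov2019}.

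The main obstacle I anticipate is the iteration clause. Verifying that the least-fixpoint iteration of $\BBH_S$, after the $H_{[\inl,(\iota'\comp f)^\natural]}\id$ and $\tau$ post-processing, satisfies the Elgot iteration axioms \emph{and} agrees with the a priori quite differently-presented iteration operator of $\BBH$ requires care: $\BBH$'s iteration was constructed ``intricately'' (as the introduction admits, no least-fixpoint characterization), so the equality cannot just be read off — I would establish it by showing both sides satisfy the fixpoint identity $f^\istar = [\eta, f^\istar]^\klstar\comp f$ and a uniformity/uniqueness property sufficient to pin the operator down, or alternatively by transporting $\BBH$'s iteration along $\kappa$ and checking the resulting formula simplifies to the displayed one using the explicit description of $\tau$ on nested trajectories. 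The subtlety is the interaction at Zeno-type points, where infinitely many loop iterations accumulate in finite time: I must check that $(\iota'\comp f)^\natural$, which only tracks the discrete loop-unfolding sequence, correctly classifies such points as divergent, and that $\tau$ then truncates the trajectory at exactly the supremum of the iteration times — this is where the conservativity of the complete $\fot_S$-module (Definition, last clause) does the real work, ensuring the least upper bound computed in $H_S$ agrees with the one in $\fot_S$.
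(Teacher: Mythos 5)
Your set-level isomorphism $\kappa$ is exactly the paper's $\theta$, and your treatment of the unit, the Kleisli-lifting clause, and the monad laws (via \Cref{lem:iota,lem:tau} and the morphism equation for $\iota'$) matches the paper's route and would go through. The gap is in the iteration clause, which you correctly flag as the main obstacle but do not actually close. Your first suggested route --- show both operators satisfy $f^\istar=[\eta,f^\istar]^\klstar\comp f$ plus ``a uniformity/uniqueness property sufficient to pin the operator down'' --- fails at the outset: for unguarded iteration the fixpoint law together with uniformity does \emph{not} determine the operator (this non-uniqueness is precisely why Elgot monads carry a list of axioms rather than a universal property), so there is no such uniqueness principle to invoke. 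Your second route, ``transport $\BBH$'s iteration along $\kappa$ and check the resulting formula simplifies,'' is the right shape but is not a proof strategy until you say how to compare the transported operator $(\argument)^\iistar$ with the displayed composite $\tau\comp H_{[\inl,(\iota'\comp f)^\natural]}\id\comp f^\istar_{S\uplus X}$ for an \emph{arbitrary} $f$, given that $\BBH$'s operator has no least-fixpoint description.

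The missing idea is the decomposition of $\BBH$'s iteration into \emph{singular} and \emph{progressive} iteration from \cite{GoncharovJakobEtAl18a}: every $f^\iistar$ factors as $g^{\iistar\iistar}$ with the inner iteration singular and the outer progressive, where $f=(H_{[\id,\inr]}[\id,\inr])\comp g$. This reduces the comparison to three concretely checkable facts: agreement of $(\argument)^\istar$ and $(\argument)^\iistar$ on singular maps, agreement on progressive maps (where $f^\iistar$ simplifies to $\hat f^\istar_S$ for an explicitly constructed $\hat f\c X\to H_S(S\uplus X)$, which one matches against \eqref{eq:f-hat}), and the codiagonal law $g^{\istar\istar}=(H_{[\id,\inr]}[\id,\inr]\comp g)^\istar$ for the new operator. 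The latter two in turn need the auxiliary Lemma~\ref{lem:iotap-morph} that $\iota'$ preserves iteration, i.e.\ $\iota'\comp f^\istar=(\iota'\comp f)^\natural$, proved by analysing the discrete unfolding sequence $x_0,x_1,\ldots$ --- this, rather than conservativity of the $\fot_S$-module (which is only used to get Theorem~\ref{thm:HS-Elgot} in the first place), is what handles the Zeno-type accumulation points you worry about. Without this decomposition, or some equally explicit handle on $\BBH$'s operator, the iteration clause of the theorem remains unproved.
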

\begin{proof}[Proof Sketch]
It is first verified that the monad axioms are satisfied using abstract properties
of $\iota$ and $\tau$, mainly provided by Lemmas~\ref{lem:iota} and~\ref{lem:tau}.
Then the isomorphism $\theta\c H_SS\iso HS$ is defined as expected:
$\theta([0,d),e,x) = \inl\brks{[0,d],\hat e}$ where $e^t = \hat e^t$
for $t\in [0,d)$, $\hat e^d=x$; and $\theta(I,e) = \inr\brks{I,e}$. It
is easy to see that $\theta$ respects the unit.  The fact that
$\theta$ respects Kleisli lifting amounts to a (tedious) verification
by case distinction. Checking the formula for~$(\argument)^\istar$
amounts to transferring the definition of $(\argument)^\istar$, as
defined in previous work~\cite{GoncharovJakobEtAl18a},
along~$\theta$. See the full proof in the appendix.
\qed\end{proof}


\section{Soundness and Adequacy}\label{sec:deno}

\begin{figure*}[t]
\begin{align*}
\sem{\prog{x\ass t}}(\sigma) =&\; \eta(\sigma\triangledown[\prog{t}\sigma/\prog{x}])\\[0.1ex]
\sem{\prog{\bar{x}' = \bar{u} \> \until \> \prog{t}}}(\sigma) 
    =&\; \brks{[0,\prog{t}\sigma), \l \mathit{t} .\,\sigma\triangledown[\phi_\sigma(\mathit{t})/\bar{\prog{x}}], \sigma\triangledown[\phi_\sigma(\prog{t}\sigma)/\bar{\prog{x}}]}\\[0.1ex]
\sem{\prog{p \scomp \, q}}(\sigma) =&\; \sem{\prog{q}}^\star(\sem{\prog{p}}(\sigma))\\[0.1ex]
\sem{\progife{b}{p}{q}}(\sigma) =&\; \ite{\sem{\prog p}(\sigma)}{\prog{b}\sigma}{\sem{\prog q}(\sigma)}\\[0.1ex]
\sem{\progwhile{b}{p}}(\sigma) =&\;
    (\l\sigma.\, \ite{(\hat H\inr)(\sem{\prog p}(\sigma))}{\prog{b}\sigma}{\eta(\inl\sigma))^\dagger(\sigma)}
\end{align*}   \vspace{-0.8cm}
  \caption{Denotational semantics.}
  \label{det_semantics}
\end{figure*}
Let us start this section by providing a denotational semantics to our
language using the results of the previous section. We will then
provide a soundness and adequacy result that formally connects the
thus established denotational semantics with the operational semantics
presented in Section~\ref{sec:syn_sem}.

First, consider the monad in~\eqref{eq:H-def} and fix
$S=\Reals^{\Vars}$. We denote the obtained instance of $H_{S}$ as
$\hat H$. Intuitively, we interpret a program $\prog{p}$ as a map
$\sem{\mathtt{p}} : S \to \hat H S$ which given an environment (a
map from variables to values) returns a trajectory over $S$.  The
definition of $\sem{\prog{p}}$ is inductive over the structure of
$\prog{p}$ and is given in~\Cref{det_semantics}.

\begin{example}
  Given an element $p = \inl \langle I,e,x \rangle \in \hat H X$, let
  us denote $I$ by $p_\dr$ and $e$ by $p_\ev$, and analogously for
  elements $p = \inr \langle I,e \rangle \in \hat H X$. Now, consider
  the program $\prog{x\ass x+1 \> \scomp \> \blue{wait} \> 1}$ and
  denote its interpretation
  $\sem{\prog{x\ass x+1 \> \scomp \> \blue{wait} \> 1}}$ by
  $f\c S \to \hat H S$. According to the denotational semantics,
  wrapping this program into an infinite while-loop yields
  $(\hat H \inr \comp f)^\dagger\c S \to \hat H S$.  Drawing a
  parallel with \cref{rem:inf}, we will show that we can derive the
  value of the trajectory $(\hat H \inr \comp f)^\dagger(\sigma)$ at
  time instant $\nicefrac{1}{2}$ by unfolding \emph{just once} the
  fixpoint equation concerning $(-)^\dagger$.  First, let us observe
  that
  $((\hat H \inr \comp f)(\sigma))^{\nicefrac{1}{2}}_{\ev} = \sigma
  \triangledown [(\prog{x + 1})\sigma / \prog{x}]$. Moreover, note
  that $((\hat H \inr \comp f)(\sigma))_{\dr} = [0,1]$ because the
  only non-instantaneous term in the program is $\prog{\blue{wait} \> 1}$,
  which terminates after exactly one time unit. Now, according to the
  Kleisli lifting of $\hat H$ (recall Definition~\ref{defn:mon}), and
  since $\nicefrac{1}{2} < 1$, the equation
  \begin{flalign*}
    ((g^\klstar \comp \hat H {\inr} \comp f) (\sigma))^{\nicefrac{1}{2}}_\ev =
    ((\hat H \inr \comp f)(\sigma))^{\nicefrac{1}{2}}_\ev
  \end{flalign*}
  holds for every map $g\c S \uplus S \to \hat HS$. Therefore,
  \begin{flalign*}
    \quad ((\hat H \inr \comp f)^\dagger(\sigma))^{\nicefrac{1}{2}}_\ev & = 
    (([\eta, (\hat H \inr \comp f)^\dagger]^\klstar  \comp
    \hat H \inr \comp f)(\sigma))^{\nicefrac{1}{2}}_\ev & \by{fixpoint equation} \\ & =
    ((\hat H \inr \comp f)(\sigma))^{\nicefrac{1}{2}}_\ev & \by{$\nicefrac{1}{2} < 1$} \\ & =
    \sigma \triangledown [(\prog{x + 1})\sigma / \prog{x}]
  \end{flalign*}
\end{example}
In order to establish soundness and adequacy between the small-step
operational semantics and the denotational semantics, we will use an
auxiliary device. Namely, we will introduce a \emph{big-step}
operational semantics that will serve as midpoint between the two
previously introduced semantics. We will show that the small-step
semantics is equivalent to the big-step one and then establish
soundness and adequacy between the big-step semantics and the
denotational one. The desired result then follows by transitivity.
The big-step rules are presented in Figure~\ref{big_step} and follow
the same reasoning than the small-step ones. The expression
$\prog{p},\sigma,t \Downarrow r, \sigma'$ means that $\prog{p}$
paired with $\sigma$ evaluates to $\prog{r},\sigma'$ at time instant $\prog{t}$.

\begin{figure}
\begin{minipage}{1\textwidth}
\begin{flalign*} 
\lrule{(diff-stop$\Downarrow$)}{\prog{t} < \prog{s}\sigma}{
  \prog{\bar{x}' = \bar{t} \>
  \until \> \prog{s}} \sep\sigma\sep t
    \bsto
    \stp\sep\sigma\triangledown[\phi_\sigma( t)/\bar{\prog{x}}]
}
\end{flalign*} 
\vspace{-4mm}
\begin{flalign*}
\lrule{(diff-skip$\Downarrow$)}{
}{
  \prog{\bar{x}' = \bar{t} \>
  \until \> t} \sep\sigma \sep \prog{t}\sigma
    \bsto
  \skp\sep\sigma\triangledown[\phi_\sigma( \prog{t}\sigma)/\bar{\prog{x}}]
}
\end{flalign*} \vspace{-4mm}
\begin{flalign*}
\lrule{(asg$\Downarrow$)}{}{
  \prog{x\ass t}\sep\sigma\sep 0
    \bsto
  \skp\sep\sigma\triangledown[\prog{t}\sigma/\prog{x}]
}
&&
\lrule{(seq-stop$\Downarrow$)}{\prog{p}\sep\sigma\sep t \bsto \stp\sep\sigma'}{
  \prog{p}\scomp \prog{q}\sep\sigma\sep t  \bsto \stp\sep\sigma'}
\end{flalign*} \vspace{-4mm}
\begin{flalign*}
\lrule{(seq-skip$\Downarrow$)}{\prog{p}\sep\sigma\sep t \bsto \skp\sep\sigma' \qquad 
\prog{q}\sep\sigma'\sep t' \bsto r\sep\sigma''}{
  \prog{p}\scomp \prog{q}\sep\sigma\sep t + t' \bsto r\sep\sigma''}\qquad (\mathtt{r}\in\{\stp,\skp\})
\end{flalign*} \vspace{-4mm}
\begin{flalign*}
  \lrule{(if-true$\Downarrow$)}{\prog{b}\sigma=\top\qquad \prog{p} \sep\sigma\sep t\bsto r\sep\sigma'}{\progife{b}{p}{q}\sep\sigma\sep t \bsto r\sep\sigma'}\qquad (\mathtt{r}\in\{\stp,\skp\})
\end{flalign*} \vspace{-4mm}
\begin{flalign*}
  \lrule{(if-false$\Downarrow$)}{\prog{b}\sigma=\bot\qquad \prog{q} \sep\sigma\sep t\bsto r\sep\sigma'}{\progife{b}{p}{q}\sep\sigma\sep t \bsto r\sep\sigma'}\qquad (\mathtt{r}\in\{\stp,\skp\})
\end{flalign*} \vspace{-4mm}
\begin{flalign*}
\lrule{(wh-true$\Downarrow$)}{b\sigma = \top\qquad \prog{p} \scomp \progwhile{b}{p}\sep\sigma\sep t\bsto r\sep\sigma' }
{
  \progwhile{b}{p}\sep\sigma\sep t 
    \bsto 
  r\sep\sigma'
}\qquad (\mathtt{r}\in\{\stp,\skp\})
\end{flalign*}
\vspace{-4mm}  
\begin{flalign*}
\lrule{(wh-false$\Downarrow$)}{b\sigma = \bot}
{
  \progwhile{b}{p}\sep\sigma\sep 0
    \bsto 
  \skp\sep\sigma 
}
\end{flalign*}
\vspace{-4mm}
  \end{minipage}
  \caption{Big-step Operational Semantics}
  \label{big_step}
\end{figure}
Next, we need the following result to formally connect both styles of
operational semantics.
\begin{lemma}\label{lem:progress}
Given a program $\prog{p}$, an environment $\sigma$ and a time instant $\prog{t}$
\needspace{2\baselineskip}
\begin{enumerate}
  \item if\/ $\prog{p}\sep\sigma\sep t\ssto\prog{p'}\sep\sigma'\sep t'$ and 
$\prog{p}'\sep\sigma'\sep t'\bsto\skp\sep\sigma''$ then 
${\prog{p}\sep\sigma\sep t\bsto\skp\sep\sigma''}$;
  \item if\/ $\prog{p}\sep\sigma\sep t\ssto\prog{p'}\sep\sigma'\sep t'$ and 
$\prog{p}'\sep\sigma'\sep t'\bsto\stp\sep\sigma''$ then 
${\prog{p}\sep\sigma\sep t\bsto\stp\sep\sigma''}$.
\end{enumerate} 
\end{lemma}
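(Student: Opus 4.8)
The plan is to prove both statements simultaneously by a case analysis on the small-step rule that justifies $\prog{p}\sep\sigma\sep t\ssto\prog{p'}\sep\sigma'\sep t'$, since the two claims differ only in the terminal flag $\skp$ versus $\stp$ and the inductive structure is shared. Note first that the hypothesis $\prog{p}\sep\sigma\sep t\ssto\prog{p'}\sep\sigma'\sep t'$ already restricts attention to the rules whose right-hand side is \emph{not} already terminal --- so the relevant cases are $\textbf{(asg}^\to\textbf)$, $\textbf{(diff-skip}^\to\textbf)$, $\textbf{(if-true}^\to\textbf)$, $\textbf{(if-false}^\to\textbf)$, $\textbf{(wh-true}^\to\textbf)$, $\textbf{(wh-false}^\to\textbf)$, $\textbf{(seq-skip}^\to\textbf)$, and $\textbf{(seq}^\to\textbf)$. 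The rules $\textbf{(diff-stop}^\to\textbf)$ and $\textbf{(seq-stop}^\to\textbf)$ do not occur here because they produce $\stp$, which cannot itself step. For the cases where $\prog{p'}$ is already $\skp$ (namely $\textbf{(asg}^\to\textbf)$, $\textbf{(diff-skip}^\to\textbf)$, $\textbf{(wh-false}^\to\textbf)$), the second hypothesis $\prog{p'}\sep\sigma'\sep t'\bsto\skp\sep\sigma''$ forces $\sigma''=\sigma'$ and $t'$ to match, and one simply reads off the matching big-step rule ($\textbf{(asg}\Downarrow\textbf)$, $\textbf{(diff-skip}\Downarrow\textbf)$, $\textbf{(wh-false}\Downarrow\textbf)$) --- here one uses \Cref{prop:shift} to relocate the time coordinate, since the big-step axioms are stated at a canonical time instant (e.g.\ $0$ or $\prog{t}\sigma$) while the small step may have left residual time.

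For the structural cases the argument is a direct mimicry. If the step is $\textbf{(if-true}^\to\textbf)$, i.e.\ $\progife{b}{p_1}{q_1}\sep\sigma\sep t\ssto\prog{p_1}\sep\sigma\sep t$ with $\prog{b}\sigma=\top$, then from $\prog{p_1}\sep\sigma\sep t\bsto\skp\sep\sigma''$ (resp.\ $\stp$) the rule $\textbf{(if-true}\Downarrow\textbf)$ immediately yields $\progife{b}{p_1}{q_1}\sep\sigma\sep t\bsto\skp\sep\sigma''$ (resp.\ $\stp$); the $\textbf{(if-false}^\to\textbf)$ and $\textbf{(wh-true}^\to\textbf)$ cases are identical using $\textbf{(if-false}\Downarrow\textbf)$ and $\textbf{(wh-true}\Downarrow\textbf)$ respectively (for the loop case, observe the small step rewrites $\progwhile{b}{p}$ to $\prog{p}\scomp\progwhile{b}{p}$, which is exactly the premise shape of $\textbf{(wh-true}\Downarrow\textbf)$). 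The case $\textbf{(seq-skip}^\to\textbf)$ is where the premise of the rule carries real content: we have $\prog{p_1}\sep\sigma\sep t\ssto\skp\sep\sigma_1\sep t_1$ and $\prog{p_1}\scomp\prog{q_1}\sep\sigma\sep t\ssto\prog{q_1}\sep\sigma_1\sep t_1$, together with $\prog{q_1}\sep\sigma_1\sep t_1\bsto r\sep\sigma''$; to assemble $\textbf{(seq-skip}\Downarrow\textbf)$ I need a big-step derivation $\prog{p_1}\sep\sigma\sep ?\bsto\skp\sep\sigma_1$, which I obtain by applying \emph{this very lemma} (the $\skp$ branch) --- but that is not an induction on the same measure, so see the next paragraph.

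The one genuinely delicate case is $\textbf{(seq}^\to\textbf)$, where $\prog{p}=\prog{p_1}\scomp\prog{q_1}$, the step is justified by $\prog{p_1}\sep\sigma\sep t\ssto\prog{p_1'}\sep\sigma'\sep t'$ with $\prog{p_1'}\notin\{\skp,\stp\}$, giving $\prog{p}'=\prog{p_1'}\scomp\prog{q_1}$, and we are handed $\prog{p_1'}\scomp\prog{q_1}\sep\sigma'\sep t'\bsto r\sep\sigma''$. I would invert this big-step judgement: it must come from $\textbf{(seq-stop}\Downarrow\textbf)$ or $\textbf{(seq-skip}\Downarrow\textbf)$, yielding in the first case $\prog{p_1'}\sep\sigma'\sep t'\bsto\stp\sep\sigma''$, and in the second a split $\prog{p_1'}\sep\sigma'\sep t'_a\bsto\skp\sep\sigma_a$, $\prog{q_1}\sep\sigma_a\sep t'_b\bsto r\sep\sigma''$ with $t'=t'_a+t'_b$ (possibly after using \Cref{prop:shift} to realign). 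Then I apply the induction hypothesis to the pair $(\prog{p_1}\sep\sigma\sep t\ssto\prog{p_1'}\sep\sigma'\sep t', \ \prog{p_1'}\sep\sigma'\sep(\cdot)\bsto s\sep(\cdot))$ to conclude $\prog{p_1}\sep\sigma\sep t\bsto s\sep(\cdot)$, and reassemble via $\textbf{(seq-stop}\Downarrow\textbf)$ or $\textbf{(seq-skip}\Downarrow\textbf)$. The subtlety --- and the point I expect to need the most care --- is the induction principle itself: this is neither induction on the step relation nor on $\prog{p}$ alone in an obvious way, because in the $\textbf{(seq-skip}^\to\textbf)$ case the recursive appeal is to a \emph{strictly smaller program} ($\prog{p_1}$ rather than $\prog{p_1}\scomp\prog{q_1}$) while in the $\textbf{(seq}^\to\textbf)$ case it is to the \emph{same-sized} program $\prog{p_1}$ but on a structurally smaller subterm. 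The clean way to organise this is induction on the structure of $\prog{p}$ (with the $\textbf{(seq}^\to\textbf)$ and $\textbf{(seq-skip}^\to\textbf)$ sub-derivations both addressing the proper subterm $\prog{p_1}$, and the while case folding into a seq after one unfolding — which is still fine since the unfolded loop body $\prog{p}\scomp\progwhile{b}{p}$ is handled by re-reading the $\textbf{(wh-true}\Downarrow\textbf)$ premise rather than recursing on it); alternatively one does a nested induction, outer on $\prog{p}$, or simply takes the number of small-step reductions needed to reach a terminal configuration as the measure, which exists and is finite by the determinism and the design of the rules. I would state explicitly at the outset which measure is used so that every recursive call in the $\textbf{(seq)}$ family is manifestly on something smaller.
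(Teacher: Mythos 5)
Your overall strategy --- a case analysis on the rule justifying the small step, with the sequential-composition cases carried by an induction hypothesis on the proper subterm --- coincides with the paper's proof, which is stated simply as induction over the derivation of the small-step relation (only the \textbf{(seq)}-family rules have premises, so that induction collapses to exactly your case split). The conditional and while cases, and the use of \Cref{prop:shift} to realign residual times when inverting \textbf{(seq-skip$\Downarrow$)} inside the \textbf{(seq$^\to$)} case, are handled correctly.

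There is, however, a concrete gap in the \textbf{(seq-skip$^\to$)} case. You propose to obtain the missing judgement $\prog{p_1}\sep\sigma\sep\text{?}\bsto\skp\sep\sigma_1$ ``by applying this very lemma''. That appeal does not go through: its second hypothesis would have to be $\skp\sep\sigma_1\sep t_1\bsto\skp\sep\sigma_1$, which is not a derivable big-step judgement ($\skp$ never occurs on the left of $\bsto$ in \Cref{big_step}); and even granting such a convention, the lemma would deliver $\prog{p_1}\sep\sigma\sep t\bsto\skp\sep\sigma_1$ at time $t$, whereas \textbf{(seq-skip$\Downarrow$)} needs the first component at time $t-t_1$ so that the durations sum to $t$ --- otherwise you conclude $\prog{p_1}\scomp\prog{q_1}\sep\sigma\sep t+t_1\bsto r\sep\sigma''$, the wrong instant. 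What is actually needed is the direct observation that a one-step reduction $\prog{p_1}\sep\sigma\sep t\ssto\skp\sep\sigma_1\sep t_1$ can only arise from \textbf{(asg$^\to$)}, \textbf{(diff-skip$^\to$)} or \textbf{(wh-false$^\to$)}, and in each case the consumed time $t-t_1$ (namely $0$, the duration prescribed by the differential statement, and $0$, respectively) together with the resulting environment matches the corresponding big-step axiom, yielding $\prog{p_1}\sep\sigma\sep t-t_1\bsto\skp\sep\sigma_1$ outright. The same inspection, rather than \Cref{prop:shift}, is what disposes of your ``terminal'' cases; note also that your criterion for discarding \textbf{(diff-stop$^\to$)} and \textbf{(seq-stop$^\to$)} (that $\stp$ ``cannot itself step'') is not the operative one --- what matters is whether the resulting configuration admits a big-step judgement, and by that criterion the $\skp$-producing rules you retain are in exactly the same position as the $\stp$-producing ones you discard.
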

\begin{proof}
  The proofs follows by induction over the derivation of the small
  step relation. \qed\end{proof}
\begin{theorem}\label{thm:osem-eq}
The small-step semantics and the big-step semantics are related as follows. Given 
a program $\prog{p}$, an environment $\sigma$ and a time instant $\prog{t}$
\begin{enumerate}
  \item $\prog{p}\sep\sigma\sep t\bsto\skp\sep\sigma'$ iff\/ 
$\prog{p}\sep\sigma\sep t\ssto[\star]\skp\sep\sigma'\sep 0$;
  \item $\prog{p}\sep\sigma\sep t\bsto\stp\sep\sigma'$ iff\/ 
$\prog{p}\sep\sigma\sep t\ssto[\star]\stp\sep\sigma'\sep 0$.
\end{enumerate} 
\end{theorem}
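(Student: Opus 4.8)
The plan is to prove both bi-implications by establishing each direction separately and using Lemma~\ref{lem:progress} for the harder ``small-step to big-step'' direction, and a straightforward structural induction (together with Proposition~\ref{prop:shift}) for the converse.

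For the ``$\ssto[\star]$ implies $\bsto$'' direction of both items, I would induct on the number of steps in the reduction sequence $\prog{p}\sep\sigma\sep t\ssto[\star]\prog{s}\sep\sigma'\sep 0$ with $\prog{s}\in\{\skp,\stp\}$. The base case is a single step: then $\prog{p}\sep\sigma\sep t\ssto\prog{s}\sep\sigma'\sep 0$ must be an instance of one of the axiom rules that directly produces $\skp$ or $\stp$, namely \textbf{(asg$^\to$)}, \textbf{(diff-stop$^\to$)}, \textbf{(diff-skip$^\to$)}, or \textbf{(wh-false$^\to$)}; in each case the matching big-step axiom \textbf{(asg$\Downarrow$)}, \textbf{(diff-stop$\Downarrow$)}, \textbf{(diff-skip$\Downarrow$)}, or \textbf{(wh-false$\Downarrow$)} applies (note that \textbf{(asg$^\to$)} and \textbf{(wh-false$^\to$)} preserve $\prog{t}$, so $\prog{t}=0$ here, matching the big-step axioms). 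For the inductive step, the sequence factors as $\prog{p}\sep\sigma\sep t\ssto\prog{p'}\sep\sigma'\sep t'\ssto[\star]\prog{s}\sep\sigma''\sep 0$; by the induction hypothesis $\prog{p'}\sep\sigma'\sep t'\bsto\prog{s}\sep\sigma''$, and then Lemma~\ref{lem:progress} (part 1 or 2 according to whether $\prog{s}$ is $\skp$ or $\stp$) gives $\prog{p}\sep\sigma\sep t\bsto\prog{s}\sep\sigma''$ directly.

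For the converse ``$\bsto$ implies $\ssto[\star]$'' direction, I would induct on the derivation of $\prog{p}\sep\sigma\sep t\bsto\prog{s}\sep\sigma'$. For the big-step axioms, the corresponding small-step axioms give a one-step reduction to $\prog{s}\sep\sigma'\sep 0$ (again using that the small-step $\skp$-producing axioms leave time unchanged and that in \textbf{(asg$\Downarrow$)} and \textbf{(wh-false$\Downarrow$)} we have $\prog{t}=0$). For the compound rules the argument is by cases on the last rule. For \textbf{(seq-stop$\Downarrow$)}: from $\prog{p}\sep\sigma\sep t\bsto\stp\sep\sigma'$ the IH gives $\prog{p}\sep\sigma\sep t\ssto[\star]\stp\sep\sigma'\sep 0$, and prefixing each step with rule \textbf{(seq-stop$^\to$)}/\textbf{(seq$^\to$)} yields $\prog{p}\scomp\prog{q}\sep\sigma\sep t\ssto[\star]\stp\sep\sigma'\sep 0$. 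For \textbf{(seq-skip$\Downarrow$)}: the IH gives $\prog{p}\sep\sigma\sep t\ssto[\star]\skp\sep\sigma'\sep 0$ and $\prog{q}\sep\sigma'\sep t'\ssto[\star]\prog{r}\sep\sigma''\sep 0$; here is where Proposition~\ref{prop:shift} is needed, to shift the first sequence up by $t'$ so it becomes $\prog{p}\sep\sigma\sep t+t'\ssto[\star]\skp\sep\sigma'\sep t'$, then prefix with \textbf{(seq$^\to$)}/\textbf{(seq-skip$^\to$)} and append the second sequence. The \textbf{(if-true$\Downarrow$)}, \textbf{(if-false$\Downarrow$)}, \textbf{(wh-true$\Downarrow$)} cases just prepend one step (\textbf{(if-true$^\to$)}, \textbf{(if-false$^\to$)}, \textbf{(wh-true$^\to$)} respectively) in front of the sequence obtained from the IH.

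The main obstacle I anticipate is getting the bookkeeping of time instants and of the distinction between $\skp$ and $\stp$ exactly right throughout the case analysis — in particular, making sure that every small-step sequence ending in a terminal configuration genuinely ends with the time component equal to $0$ (this relies on the fact that the only rules producing $\skp$ or $\stp$ either set the time to $0$ outright, as in \textbf{(diff-stop$^\to$)}, or are invoked on configurations whose time is already $0$ because the big-step axioms force $\prog{t}=0$), and that the time-shifting step in the \textbf{(seq-skip$\Downarrow$)} case is applied coherently across an entire multi-step sequence. A secondary subtlety is that Proposition~\ref{prop:shift} as stated is about single steps and the $\skp$-terminal case; one needs its evident extension to $\ssto[\star]$, obtained by iterating it, which I would state inline. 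Once these invariants are pinned down, each individual case is routine and mechanical, and the desired equivalence of small-step and big-step semantics follows by combining the two directions.
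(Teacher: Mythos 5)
Your proof follows the paper's argument exactly: the right-to-left direction by induction on the length of the small-step reduction sequence, discharging the inductive step with Lemma~\ref{lem:progress}, and the left-to-right direction by induction on the big-step derivation, using (an iterated form of) Proposition~\ref{prop:shift} in the \textbf{(seq-skip$\Downarrow$)} case. One small slip worth fixing: in your base case you assert that a single step to $\skp$ or $\stp$ must be an instance of one of the four axiom rules, but \textbf{(seq-stop$^\to$)} also produces $\stp$ in a single step of the relation (when $\prog{p}\scomp\prog{q}$ reduces because its premise for $\prog{p}$ does), so that case requires an inner induction on the derivation of the step; the repair is routine and does not change the structure of the argument.
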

\begin{proof}
  The right-to-left direction is obtained by induction over the length
  of the small-step reduction sequence using \Cref{lem:progress}.
  The left-to-right direction follows by induction over the proof of
  the big-step judgement using Proposition~\ref{prop:shift}.
  \qed\end{proof}
  Finally, we can connect the operational and the denotational
  semantics in the expected way.

\begin{theorem}[Soundness and Adequacy]\label{thm:adeq}
Given a program $\prog{p}$, an environment $\sigma$ and a time instant $\prog{t}$
\begin{enumerate}
  \item $\prog{p}\sep\sigma\sep t\ssto[\star]\skp\sep\sigma'\sep 0$ 
iff\/ $\sem{\prog{p}}(\sigma) = (h\c [0,t)\to \Reals^\Vars,\sigma')$; 
\item $\prog{p}\sep\sigma\sep t\ssto[\star]\stp\sep\sigma'\sep 0$ iff\/
  either
  $\sem{\prog{p}}(\sigma) = (h\c [0,t')\to
  \Reals^\Vars,\sigma'')$ or
  $\sem{\prog{p}}(\sigma) = h\c [0,t')\to \Reals^\Vars$, and in
  either case with $\prog{t}'> \prog{t}$ and $h(\prog{t}) = \sigma'$.
\end{enumerate}
\end{theorem}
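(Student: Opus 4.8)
The plan is to route the whole argument through the big-step semantics of Figure~\ref{big_step}: by Theorem~\ref{thm:osem-eq} it suffices to prove both statements with $\prog p\sep\sigma\sep t\bsto\skp\sep\sigma'$ in place of $\prog p\sep\sigma\sep t\ssto[\star]\skp\sep\sigma'\sep 0$ and with $\prog p\sep\sigma\sep t\bsto\stp\sep\sigma'$ in place of $\prog p\sep\sigma\sep t\ssto[\star]\stp\sep\sigma'\sep 0$. Each of the two statements is an ``iff'', which I would split into a \emph{soundness} direction (operational implies denotational) and an \emph{adequacy} direction (denotational implies operational). It is convenient to record first, by induction on $\prog p$ (and, for the while case, a side induction on the approximants introduced below), the invariant that $\sem{\prog p}(\sigma)$ always has a domain of the form $[0,d)$, with $d\in\realp$ when the trajectory is convergent and $d\in\realpe$ when it is divergent; this explains why only $[0,t)$ and $[0,t')$ occur in the statement.

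For the soundness direction I would induct on the derivation of $\prog p\sep\sigma\sep t\bsto r\sep\sigma'$. The atomic cases follow by unwinding Figure~\ref{det_semantics} (for the differential statement the big-step rule and the denotational clause literally share the solution map $\phi_\sigma$), and the conditional cases are immediate from the clause defining $\sem{\progife{b}{p}{q}}$. For \textbf{(seq-skip$\Downarrow$)} one invokes the explicit Kleisli lifting of the generalized writer monad of Definition~\ref{def:gen-write}: from $\sem{\prog p}(\sigma)=(h_1\c[0,t)\to\Reals^\Vars,\sigma')$ and the shape of $\sem{\prog q}(\sigma')$ supplied by the induction hypothesis, $\sem{\prog q}^\klstar(\sem{\prog p}(\sigma))$ concatenates the two trajectories, its domain being the Minkowski sum $[0,t)+[0,t')=[0,t+t')$; \textbf{(seq-stop$\Downarrow$)} is analogous, using instead the clauses $f^\klstar(m,x)=m\later e$ and $f^\klstar(e)=e$. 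The only genuinely monadic step is \textbf{(wh-true$\Downarrow$)}: writing $F=\l\sigma.\,\ite{(\hat H\inr)(\sem{\prog p}(\sigma))}{\prog b\sigma}{\eta(\inl\sigma)}$ for the map iterated in Figure~\ref{det_semantics}, the fixpoint law $\sem{\progwhile{b}{p}}=[\eta,\sem{\progwhile{b}{p}}]^\klstar\comp F$ together with $[\eta,g]^\klstar\comp\hat H\inr=g^\klstar$ and $\prog b\sigma=\top$ gives $\sem{\progwhile{b}{p}}(\sigma)=\sem{\prog p\scomp\progwhile{b}{p}}(\sigma)$, so the conclusion transfers verbatim from the strictly smaller derivation in the premise; and \textbf{(wh-false$\Downarrow$)} gives $\sem{\progwhile{b}{p}}(\sigma)=\eta(\sigma)=(h\c[0,0)\to\Reals^\Vars,\sigma)$, as required.

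For the adequacy direction I would induct on the structure of $\prog p$. The atomic, sequential and conditional cases dualize the soundness computations (for sequencing one splits on whether $t$ falls inside the trajectory of the first component or beyond it, uses the relevant half of the hypothesis for each subprogram, and recombines via \textbf{(seq-skip$\Downarrow$)}/\textbf{(seq-stop$\Downarrow$)}). For $\progwhile{b}{p}$ I would appeal to Theorem~\ref{thm:HS-Elgot}: $F^\istar=\lub_n g_n$ with $g_0=\bot$ and $g_{n+1}=[\eta,g_n]^\klstar\comp F$, which evaluates to $g_{n+1}(\sigma)=\eta(\sigma)$ when $\prog b\sigma=\bot$ and to $g_{n+1}(\sigma)=g_n^\klstar(\sem{\prog p}(\sigma))$ when $\prog b\sigma=\top$. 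Since a convergent element of $\hat H X$ is $\appr$-maximal whereas a least upper bound of divergent trajectories is again divergent, $F^\istar(\sigma)$ equals some $g_n(\sigma)$ as soon as it is convergent, and otherwise the (open and increasing) domains along the chain exhaust the domain of $F^\istar(\sigma)$; either way, for every $\sigma$ and $t$ there is $n$ such that $g_n(\sigma)$ already witnesses $\sem{\progwhile{b}{p}}(\sigma)$ at $t$ (\ie it is convergent with domain $[0,t)$ and the same final state, or it is a trajectory with domain properly containing $[0,t)$ and the same value at $t$). A side induction on $n$ then builds the required big-step judgement: the base and the $\prog b\sigma=\bot$ case use \textbf{(wh-false$\Downarrow$)}, while the step with $\prog b\sigma=\top$ case-splits on whether $\sem{\prog p}(\sigma)$ is convergent or divergent and on whether $t$ lies before or past its duration, combining the outer hypothesis for $\prog p$ with the inner hypothesis for $g_n$ through \textbf{(seq-skip$\Downarrow$)}/\textbf{(seq-stop$\Downarrow$)} and closing with \textbf{(wh-true$\Downarrow$)}.

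The main obstacle I anticipate is precisely this while case of adequacy: reading the approximants $g_n$ off the least-fixpoint construction of Theorem~\ref{thm:HS-Elgot}, establishing that the least upper bound is ``attained at a finite stage'' relative to any fixed time instant (which rests on $\appr$-maximality of convergent trajectories and on the chain's domains being open and increasing), and orchestrating the two-level induction so that the loop body is always discharged by the outer hypothesis and the residual loop by the inner one. Everything else amounts to bookkeeping with the generalized-writer Kleisli lifting and with Minkowski sums of trajectory domains.
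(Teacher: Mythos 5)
Your proposal follows essentially the same route as the paper: reduce both claims to the big-step semantics via Theorem~\ref{thm:osem-eq}, prove soundness by induction on the big-step derivation, and prove adequacy by structural induction on $\prog{p}$ with the while-case handled through the Elgot fixpoint law. Your additional use of the $\omega$-chain of approximants from Theorem~\ref{thm:HS-Elgot} (and the observation that any fixed time instant is witnessed at a finite stage) is a correct and welcome elaboration of the step the paper compresses into ``we call on the fixpoint law'', not a different argument.
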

Here, ``soundness'' corresponds to the left-to-right directions of the equivalences
and ``adequacy'' to the right-to-left ones.
\begin{proof}
By Theorem~\ref{thm:osem-eq}, we equivalently replace the goal as follows:
\begin{enumerate}
  \item $\prog{p}\sep\sigma\sep t\bsto\skp\sep\sigma'$ 
iff $\sem{\prog{p}}(\sigma) = (h\c [0,t)\to \Reals^\Vars,\sigma')$; 
\item $\prog{p}\sep\sigma\sep t\bsto\stp\sep\sigma'$ iff
    either
  $\sem{\prog{p}}(\sigma) = (h\c [0,t')\to
  \Reals^\Vars,\sigma'')$ or
  $\sem{\prog{p}}(\sigma) = h\c [0,t')\to \Reals^\Vars$, and in
  either case with $\prog{t}'> \prog{t}$ and $h(\prog{t}) = \sigma'$.
\end{enumerate}
Then the ``soundness'' direction is obtained by induction over the
derivation of the rules in Fig.~\ref{big_step}. The ``adequacy''
direction follows by structural induction over $\prog{p}$; for
while-loops, we call on the fixpoint law
$[\eta,f^\dagger]^\star\comp f= f^\dagger$ of Elgot monads.\qed
\end{proof}

\section{Implementation} \label{sec:arch}

This section presents our prototype implementation -- \Lince\ -- which
is available online both to run in our servers and to be compiled and
executed locally (\url{http://arcatools.org/lince}). Its architecture
is depicted in \Cref{fig:Lince}. The dashed rectangles correspond to
its main components. The one on the left (\textbf{Core engine})
provides the parser respective to the while-language and the engine to
evaluate hybrid programs using the small-step operational semantics of
\Cref{sec:syn_sem}. The one on the right (\textbf{Inspector}) depicts
trajectories produced by hybrid programs according to parameters
specified by the user and provides an interface to evaluate hybrid
programs at specific time instants (the initial environment
$\sigma : \Vars \to \Reals$ is assumed to be the function constant on
zero). As already mentioned, plots are generated by automatically
evaluating at different time instants the program given as input.
Incoming arrows in the figure denote an input relation and
outgoing arrows denote an output relation.
The two main components 
are further explained~below.

\begin{figure*}[t]
  \centering
  \input{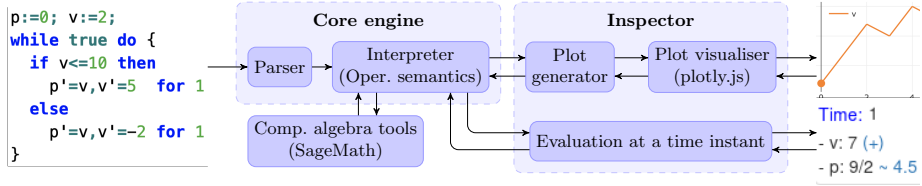}
  \caption{Depiction of \Lince's architecture}
  \label{fig:Lince}
\end{figure*}
\myparagraph{Core engine} 
Our implementation extensively uses the computer algebra tool
\tool{SageMath} \cite{sage}. This serves two purposes: (1) to solve
systems of differential equations (present in hybrid programs); and
(2) to correctly evaluate if-then-else statements. Regarding the
latter, note that we do not merely use predicate functions in
programming languages for evaluating Boolean conditions, essentially
because such functions tend to give wrong results in the presence of
real numbers (due to the finite precision problem). Instead of this,
\Lince\ uses \tool{SageMath} and its ability to perform advanced
symbolic manipulation to check whether a Boolean condition is true or
not. However, note that this will not always give an output,
fundamentally because solutions of linear differential equations
involve transcendental numbers and real-number arithmetic with such
numbers is undecidable \cite{kong15}.  We leave as future work the
development of more sophisticated techniques for avoiding errors in
the computational evaluation of hybrid programs.

\myparagraph{Inspector} The user interacts with \Lince\ at two
different stages: (a) when inputting a hybrid program and (b) when
inspecting trajectories using \Lince's output interfaces.
The latter case consists of adjusting different parameters for
observing the generated plots in an optimal way. Plot parameters
include the time range of observation, visibility of variable's
trajectories, and options to display additional information about the
trajectory (\eg\ where in
time conditional statements are evaluated).

\myparagraph{Event-triggered programs}
Observe that the differential statements
$\prog{x'_1 = t}, \dots, \prog{x'_n = t} \> \prog{\until} \> \prog{t}$
are \emph{time-triggered}: they terminate precisely when the instant
of time $\prog{t}$ is achieved. In the area of hybrid systems it is
also usual to consider \emph{event-triggered} programs: those that
terminate \emph{as soon as} a specified condition $\psi$ becomes true
\cite{witsenhausen66,copp16,Goebel09}. So we next consider atomic
programs of the type
$\> \prog{x'_1 = t}, \dots, \prog{x'_n = t} \> \prog{\uuntil} \>
\prog{\psi}$ where $\psi$ is an element of the free Boolean algebra
generated by $\prog{t} \leq \prog{s}$ and $\prog{t} \geq \prog{s}$
where $\prog{t,s} \in \mathtt{LTerm}(\Vars)$, signalling the
termination of the program. In general, it is impossible to determine
with \emph{exact} precision when such programs terminate (again due to
the undecidability of real-number arithmetic with transcendental
numbers). A natural option is to tackle this problem by checking the
condition $\psi$ periodically, which essentially reduces
event-triggered programs into time-triggered ones. The cost is that
the evaluation of a program might greatly diverge from the nominal
behaviour, as discussed for instance in
documents~\cite{broman18,copp16} where an analogous approach is
discussed for the well-established simulation tools \tool{Simulink}
and \tool{Modelica}. In our case, we allow programs of the form
$\prog{x'_1 = t}, \dots, \prog{x'_n = t} \> \prog{\uuntil}_\epsilon \>
\prog{\psi}$ in the tool and define them as the abbreviation of
$\progwhile{\text{$\neg$} \psi}{\prog{x'_1 = t}, \dots, \prog{x'_n =
    t} \> \prog{\until} \> \epsilon}$. This sort of abbreviation has
the advantage of avoiding spurious evaluations of hybrid programs
w.r.t. the established semantics. We could indeed easily allow such
event-triggered programs natively in our language (\ie\ without
recurring to abbreviations) and extend the semantics accordingly. But
we prefer not to do this at the moment, because we wish first to fully
understand the ways of limiting spurious computational
evaluations arising from event-triggered programs.

\begin{remark}
  \tool{Simulink} and \tool{Modelica} are powerful tools for
  simulating hybrid systems, but lack a well-established, formal
  semantics. This is discussed for example
  in~\cite{bouissou12,foster16}, where the authors aim to provide
  semantics to subsets of \tool{Simulink} and \tool{Modelica}. Getting
  inspiration from control theory, the language of \tool{Simulink} is
  circuit-like, block-based; the language of \tool{Modelica} is
  \emph{acausal} and thus particularly useful for modelling electric
  circuits and the like which are traditionally modelled by systems of
  equations.
\end{remark}

\begin{example}[Bouncing Ball]
  As an illustration of the approach described above for
  event-triggered programs, take a bouncing ball dropped at a
  positive height $\prog{p}$ and with no initial velocity
  $\prog{v}$. Due to the gravitational acceleration $\prog{g}$, it
  falls to the ground and bounces back up, losing part of its
  kinetic energy in the process. This can be approximated by the
  following hybrid program
  \begin{flalign*}
    (\prog{p' = v, v' = g \> \uuntil_{0.01} \> p \leq 0 \wedge v \leq 0)
   \scomp
   (v\ass  v \times - 0.5)}
  \end{flalign*}
  where $\prog{0.5}$ is the dampening factor of the ball. We now want to
  drop the ball from a specific height (\eg\ $\prog{5}$ meters) and let
  it bounce until it stops. Abbreviating the previous program into
  $\prog{b}$, this behaviour can be approximated by
   $\prog{ p\ass  5 \scomp v\ass 0 \scomp }
   \progwhile{true}{b}$.
  \Cref{fig:bb} presents the trajectory generated by the
  ball (calculated by \Lince). Note that since
  $\epsilon = 0.01$ the ball reaches below ground, as shown in
  \Cref{fig:bb} on the right. Other examples of event- and time-triggered
  programs can be seen in \Lince's website.

\end{example}
\begin{figure*}[tb]
  \centering
  \includegraphics[height=29mm]{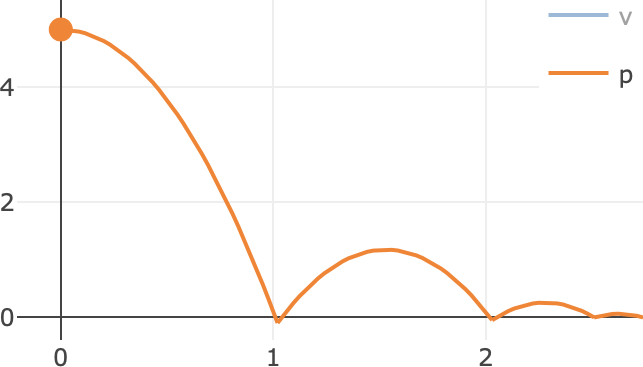}
  ~~~~~~~~~~~~~~
  \includegraphics[height=29mm]{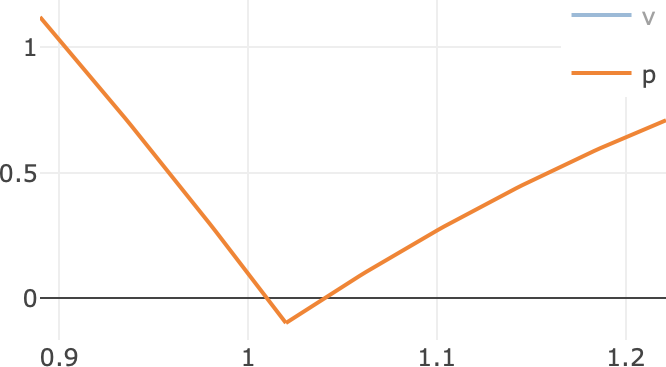}
  \caption{Position of the bouncing ball over time (plot on the left);
    zoomed in position of the bouncing ball at the first bounce (plot
    on the right).}
  \label{fig:bb}
\end{figure*}

\section{Conclusions and future work}
\label{sec:concl}

We introduced small-step and big-step operational semantics for hybrid
programs suitable for implementation purposes and provided a
denotational counterpart via the notion of Elgot monad. These
semantics were then linked by a soundness and adequacy
theorem~\cite{winskel93}. We regard these results as a stepping stone
for developing computational tools and techniques for hybrid
programming; which we attested with the development of \Lince.  With
this work as basis, we plan to explore the following research lines in
the near future.

\noindent
\textbf{Program equivalence}. Our denotational
semantics entails a natural notion of program equivalence
(denotational equality) which inherently includes classical laws of
iteration and a powerful \emph{uniformity}
principle~\cite{SimpsonPlotkin00}, thanks to the use of Elgot
monads. We intend to further explore the equational theory of our
language so that we can safely refactor/simplify hybrid programs. Note that the
theory includes equational schema like
  $(\prog{x\ass a \scomp x\ass b}) ~=~ \prog{x \ass b}$ and
  $(\prog{\blue{wait} \> a \scomp \blue{wait} \> b}) ~=~
  \prog{\blue{wait} \> (a + b)}$
thus encompassing not only usual laws of programming but also
axiomatic principles behind the notion of time.

\noindent
\textbf{New program constructs}. Our while-language is intended to be
as simple as possible whilst harbouring the core, uncontroversial
features of hybrid programming. This was decided so that we could use
the language as both a theoretical and practical basis for advancing
hybrid programming. A particular case that we wish to explore next is
the introduction of new program constructs, including \eg\
non-deterministic or probabilistic choice and exception operations
$\prog{raise(exc)}$. Denotationally, the fact that we used monadic
constructions readily provides a palette of techniques for this
process, \eg\ tensoring and distributive
laws~\cite{Ghani02,manes2007}.

\noindent
\textbf{Robustness}. One important aspect of hybrid programming is
that programs should be \emph{robust}: small variations in their
input should \emph{not} result in big changes in their output
\cite{shorten07,liberzon99}. We wish to extend \Lince\ with
features for automatically detecting non-robust programs. A main
source of non-robustness are conditional statements
$\progife{b}{p}{q}$: very small changes in their input may change the
validity of $\prog{b}$ and consequently cause a switch between
(possibly very different) execution branches. Currently, we are
working on the systematic detection of non-robust conditional
statements in hybrid programs, by taking advantage of the notion of
$\delta$-perturbation \cite{kong15}. 

\subsubsection*{Acknowledgements}
 The first author would like to acknowledge support of German Research Council (DFG) 
 under the project \emph{A High Level Language for Monad-based Processes (GO~2161/1--2)}.
 The second author was financed by the ERDF -- European Regional
 Development Fund through the Operational Programme for Competitiveness
 and Internationalisation -- COMPETE 2020 Programme and by National
 Funds through the Portuguese funding agency, FCT -- Funda\cedilla{c}\~{a}o
 para a Ci\^{e}ncia e a Tecnologia, within project
 POCI-01-0145-FEDER-030947.
 The third author was partially supported by National Funds through \\
 FCT/MCTES, within
 the CISTER Research Unit (UIDB/04234/2020);
 by COMPETE 2020 under the PT2020 Partnership Agreement, through ERDF, and by national funds
 through the FCT, within project POCI-01-0145-FEDER-029946;
 by the Norte Portugal Regional Operational Programme (NORTE 2020)
 under the Portugal 2020 Partnership Agreement, through ERDF and also
 by national funds through the FCT, within project
 NORTE-01-0145-FEDER-028550; and
 by the FCT within project ECSEL/0016/2019 and the ECSEL Joint
 Undertaking (JU) under grant agreement No 876852.  The JU receives
 support from the European Union's Horizon 2020 research and innovation
 programme and Austria, Czech Republic, Germany, Ireland, Italy,
 Portugal, Spain, Sweden, Turkey.

\bibliographystyle{myabbrv}
\bibliography{biblio}

\begin{thebibliography}{10}

\bibitem{AdamekHerrlichEtAl90}
J.~Ad{\'a}mek, H.~Herrlich, and G.~Strecker.
\newblock {\em Abstract and concrete categories}.
\newblock John Wiley \& Sons Inc., New York, 1990.

\bibitem{adamek11}
J.~Ad{\'a}mek, S.~Milius, and J.~Velebil.
\newblock Elgot theories: a new perspective on the equational properties of
  iteration.
\newblock {\em Mathematical Structures in Computer Science}, 21(2):417--480,
  2011.

\bibitem{bouissou12}
O.~Bouissou and A.~Chapoutot.
\newblock An operational semantics for {S}imulink's simulation engine.
\newblock In {\em ACM SIGPLAN Notices}, vol.~47, pp. 129--138. ACM, 2012.

\bibitem{broman18}
D.~Broman.
\newblock Hybrid simulation safety: Limbos and zero crossings.
\newblock In {\em Principles of Modeling}, pp. 106--121. Springer, 2018.

\bibitem{duration_calculus1}
Z.~Chaochen, C.~A.~R. Hoare, and A.~P. Ravn.
\newblock A calculus of durations.
\newblock {\em Information Processing Letters}, 40(5):269--276, 1991.

\bibitem{copp16}
D.~A. Copp and R.~G. Sanfelice.
\newblock A zero-crossing detection algorithm for robust simulation of hybrid
  systems jumping on surfaces.
\newblock {\em Simulation Modelling Practice and Theory}, 68:1--17, 2016.

\bibitem{diezel20}
T.~L. Diezel and S.~Goncharov.
\newblock {Towards Constructive Hybrid Semantics}.
\newblock In Z.~M. Ariola, ed., {\em 5th International Conference on Formal
  Structures for Computation and Deduction (FSCD 2020)}, vol. 167 of {\em
  LIPIcs}, pp. 24:1--24:19, Dagstuhl, Germany, 2020. Schloss
  Dagstuhl--Leibniz-Zentrum f{\"u}r Informatik.

\bibitem{elgot75}
C.~Elgot.
\newblock Monadic computation and iterative algebraic theories.
\newblock In {\em Studies in Logic and the Foundations of Mathematics},
  vol.~80, pp. 175--230. Elsevier, 1975.

\bibitem{foster16}
S.~Foster, B.~Thiele, A.~Cavalcanti, and J.~Woodcock.
\newblock Towards a {UTP} semantics for {Modelica}.
\newblock In {\em International Symposium on Unifying Theories of Programming},
  pp. 44--64. Springer, 2016.

\bibitem{fritzson2014}
P.~Fritzson.
\newblock {\em Principles of object-oriented modeling and simulation with
  Modelica 3.3: a cyber-physical approach}.
\newblock John Wiley \& Sons, 2014.

\bibitem{Goebel09}
R.~Goebel, R.~G. Sanfelice, and A.~R. Teel.
\newblock Hybrid dynamical systems.
\newblock {\em IEEE Control Systems}, 29(2):28--93, 2009.

\bibitem{goncharov18}
S.~Goncharov, J.~Jakob, and R.~Neves.
\newblock A semantics for hybrid iteration.
\newblock In {\em 29th International Conference on Concurrency Theory, {CONCUR}
  2018}. Schloss Dagstuhl - Leibniz-Zentrum fuer Informatik, 2018.

\bibitem{GoncharovJakobEtAl18a}
S.~Goncharov, J.~Jakob, and R.~Neves.
\newblock A semantics for hybrid iteration.
\newblock {\em CoRR}, abs/1807.01053, 2018.

\bibitem{goncharov2019}
S.~Goncharov and R.~Neves.
\newblock An adequate while-language for hybrid computation.
\newblock In {\em Proceedings of the 21st International Symposium on Principles
  and Practice of Programming Languages 2019}, PPDP '19, pp. 11:1--11:15, New
  York, NY, USA, 2019. ACM.

\bibitem{GoncharovSchroderEtAl18}
S.~Goncharov, L.~Schr{\"{o}}der, C.~Rauch, and J.~Jakob.
\newblock Unguarded recursion on coinductive resumptions.
\newblock {\em Logical Methods in Computer Science}, 14(3), 2018.

\bibitem{goncharov17}
S.~Goncharov, L.~Schr{\"o}der, C.~Rauch, and M.~Pir{\'o}g.
\newblock Unifying guarded and unguarded iteration.
\newblock In {\em International Conference on Foundations of Software Science
  and Computation Structures}, pp. 517--533. Springer, 2017.

\bibitem{hybridautomata}
T.~A. Henzinger.
\newblock The theory of hybrid automata.
\newblock In {\em LICS96': Logic in Computer Science, 11th Annual Symposium,
  New Jersey, USA, July 27-30, 1996}, pp. 278--292. IEEE, 1996.

\bibitem{algebra_hybrid}
P.~H{\"{o}}fner and B.~M{\"{o}}ller.
\newblock An algebra of hybrid systems.
\newblock {\em The Journal of Logic and Algebraic Programming}, 78(2):74 -- 97,
  2009.

\bibitem{HuertayMuniveStruth18}
J.~J. Huerta~y Munive and G.~Struth.
\newblock Verifying hybrid systems with modal kleene algebra.
\newblock In J.~Desharnais, W.~Guttmann, and S.~Joosten, eds., {\em Relational
  and Algebraic Methods in Computer Science}, pp. 225--243, Cham, 2018.
  Springer International Publishing.

\bibitem{kong15}
S.~Kong, S.~Gao, W.~Chen, and E.~Clarke.
\newblock dreach: $\delta$-reachability analysis for hybrid systems.
\newblock In {\em International Conference on TOOLS and Algorithms for the
  Construction and Analysis of Systems}, pp. 200--205. Springer, 2015.

\bibitem{liberzon99}
D.~Liberzon and A.~S. Morse.
\newblock Basic problems in stability and design of switched systems.
\newblock {\em IEEE Control systems}, 19(5):59--70, 1999.

\bibitem{Ghani02}
C.~L{\"{u}}th and N.~Ghani.
\newblock Composing monads using coproducts.
\newblock In M.~Wand and S.~L.~P. Jones, eds., {\em ICFP'02: Functional
  Programming, 7th ACM SIGPLAN International Conference, Pittsburgh, USA,
  October 04 - 06, 2002}, pp. 133--144. {ACM}, 2002.

\bibitem{manes2007}
E.~Manes and P.~Mulry.
\newblock Monad compositions {I}: general constructions and recursive
  distributive laws.
\newblock {\em Theory and Applications of Categories}, 18(7):172--208, 2007.

\bibitem{moggi:1989}
E.~Moggi.
\newblock Computational lambda-calculus and monads.
\newblock In {\em Proceedings of the Fourth Annual Symposium on Logic in
  Computer Science {(LICS} '89), Pacific Grove, California, USA, June 5-8,
  1989}, pp. 14--23. {IEEE} Computer Society, 1989.

\bibitem{moggi:1991}
E.~Moggi.
\newblock Notions of computation and monads.
\newblock {\em Information and computation}, 93(1):55--92, 1991.

\bibitem{neves18}
R.~Neves.
\newblock {\em Hybrid programs}.
\newblock PhD thesis, Minho University, 2018.

\bibitem{olveczky07}
P.~C. {\"O}lveczky and J.~Meseguer.
\newblock Semantics and pragmatics of real-time maude.
\newblock {\em Higher-order and symbolic computation}, 20(1-2):161--196, 2007.

\bibitem{Platzer08}
A.~Platzer.
\newblock Differential dynamic logic for hybrid systems.
\newblock {\em Journal of Automated Reasoning}, 41(2):143--189, 2008.

\bibitem{Platzer10}
A.~Platzer.
\newblock {\em Logical Analysis of Hybrid Systems: Proving Theorems for Complex
  Dynamics}.
\newblock Springer, Heidelberg, 2010.

\bibitem{rajkumar10}
R.~R. Rajkumar, I.~Lee, L.~Sha, and J.~Stankovic.
\newblock Cyber-physical systems: the next computing revolution.
\newblock In {\em DAC'10: Design Automation Conference, 47th ACM/IEEE
  Conference, Anaheim, USA, June 13-18, 2010}, pp. 731--736. IEEE, 2010.

\bibitem{sage}
W.~Stein et~al.
\newblock {\em {S}age {M}athematics {S}oftware ({V}ersion 6.4.1)}.
\newblock The Sage Development Team, 2015.
\newblock {\tt http://www.sagemath.org}.

\bibitem{shorten07}
R.~Shorten, F.~Wirth, O.~Mason, K.~Wulff, and C.~King.
\newblock Stability criteria for switched and hybrid systems.
\newblock {\em Society for Industrial and Applied Mathematics (review)},
  49(4):545--592, 2007.

\bibitem{SimpsonPlotkin00}
A.~Simpson and G.~Plotkin.
\newblock Complete axioms for categorical fixed-point operators.
\newblock In {\em Logic in Computer Science, LICS 2000}, pp. 30--41, 2000.

\bibitem{suenaga11}
K.~Suenaga and I.~Hasuo.
\newblock Programming with infinitesimals: A while-language for hybrid system
  modeling.
\newblock In {\em International Colloquium on Automata, Languages, and
  Programming}, pp. 392--403. Springer, 2011.

\bibitem{uustalu03}
T.~Uustalu.
\newblock Generalizing substitution.
\newblock {\em RAIRO-Theoretical Informatics and Applications}, 37(4):315--336,
  2003.

\bibitem{Glabbeek90}
R.~van Glabbeek.
\newblock The linear time-branching time spectrum (extended abstract).
\newblock In {\em Theories of Concurrency, CONCUR 1990}, vol. 458, pp.
  278--297, 1990.

\bibitem{winskel93}
G.~Winskel.
\newblock {\em The formal semantics of programming languages: an introduction}.
\newblock MIT press, 1993.

\bibitem{witsenhausen66}
H.~Witsenhausen.
\newblock A class of hybrid-state continuous-time dynamic systems.
\newblock {\em IEEE Transactions on Automatic Control}, 11(2):161--167, 1966.

\end{thebibliography}

\clearpage
\appendix

\section{Appendix: Omitted Proofs}

\subsection*{Proof of Theorem~\ref{thm:determ}}
The proof follows by inspecting the structure of program terms:
first, for atomic programs the proof follows directly, because the
corresponding premises are mutually exclusive. For conditionals, the
proof also follows directly due to the same reason. For sequential
composition $\prog{ p \> \scomp \> q }$, we need to proceed by case
distinction: if $\prog{p}$ is atomic then the only applicable rules
are \textbf{(seq-stop$^\to$)} and \textbf{(seq-skip$^\to$)} but then
it is easy to see that the corresponding premises are mutually
exclusive.  If $\prog{p}$ is non-atomic then the only applicable
rules are \textbf{(seq$^\to$)} and \textbf{(seq-skip$^\to$)}. But in
this context, the application of \textbf{(seq-skip$^\to$)} requires
that $\prog{p}$ is a while-loop with $\prog{b}\sigma = \bot$ which
forbids the application of \textbf{(seq$^\to$)}. Conversely, the
application of \textbf{(seq$^\to$)} requires that $\prog{p}$ is not
a while-loop with $\prog{b}\sigma = \bot$ and thus we cannot apply
\textbf{(seq-skip$^\to$)}. The proof for while-loops is direct
because the relevant premises are mutually disjoint.
\qed

\subsection*{Proof of Theorem~\ref{thm:HS-Elgot}}
  \Cref{prop:free} entails an enrichment of the Kleisli category of
  $\BBH'_S$ over complete partial orders~\cite[Theorem~7]{diezel20}
  and such monads are Elgot by a general argument~\cite[Theorem
  5.8]{GoncharovSchroderEtAl18}.
The monad $\BBH_S$ can be obtained from $\BBH'_S$ by application of an 
\emph{exception monad transformer}, which sends $\BBT$ to 
$\BBT(\argument\uplus \sum_{I\in [0,\realp]} S^I)$ and then $\BBH_S$ is again Elgot by a 
general result~\cite[Theorem 7.1]{GoncharovSchroderEtAl18}. The obtained iteration
operator is by definition a least fixpoint of the same $\omega$-continuous endomap.
\qed

\subsection*{Proof of Lemma~\ref{lem:tau}}
The proof that the equation concerning monad units
$\tau \comp \eta^{H_{S \uplus Y}} = \eta^{H_{S}}$ holds follows easily
from the fact that
$\tau(\eps,x) = \brks{\eps, x}$.  It remains to
show that the equation concerning Kleisli liftings
\begin{flalign*}
    \tau \comp f^{\klstar}_{H_{S\uplus  Y}} (p) = (\tau \comp f)^\klstar_{H_{S}} \comp \tau (p)
\end{flalign*}
also holds for every element $p \in H_{S\uplus Y} X$.  This is
straightforward, but laborious because it requires several case
distinctions. We first consider the simple case in which $p = \brks{I,e}$
for some interval $I$ and trajectory $e$:
  \begin{align*}
    \tau \comp f^{\klstar}_{H_{S\uplus  Y}} (I,e) =&\; \tau(I,e) \\
    =&\; (\tau \comp f)^\klstar_{H_{S}} \comp \tau (I,e)
  \end{align*}
Next, we consider the case in which $p = \brks{I,e,x}$ for some interval
$I$, trajectory $e$, and element $x \in X$. We proceed with a further
case distinction: first, we assume the existence of some
$t \in I$ such that $e^t = \inr y$ for some $y \in Y$. In this
case $\tau(I,e,x)$ is open convergent, and therefore,
\begin{align*}
  \tau \comp f^{\klstar}_{H_{S\uplus  Y}} (I,e,x) =&\; \tau(I,e,x) \\*
  =&\; (\tau \comp f)^\klstar_{H_{S}} \comp \tau (I,e,x)
\end{align*}
Now we assume the opposite, namely that there is no $t \in I$ such
that $e^t = \inr y$ for some $y \in Y$. For this particular case, we
can slightly abuse notation and state that
$\tau(I,e,x) = \langle I, e, x \rangle$. Going further in case
distinctions, we assume that $f(x) = \langle I', e' \rangle$ for some
interval $I'$ and trajectory $e'$.  In this case,
\begin{align*}
    \tau \comp f^{\klstar}_{H_{S\uplus  Y}} (I,e,x) =&\;
    \tau (\langle I, e \rangle \tconc \langle I', e' \rangle ) \\
    =&\; \langle I, e \rangle \tconc \tau (I',e') \\
    =&\; (\tau \comp f)^\klstar_{H_{S}} \langle I, e, x \rangle \\
    =&\; (\tau \comp f)^\klstar_{H_{S}} \comp \tau ( I, e, x )
\end{align*}
Next we assume that $f(x) = \langle I', e', x' \rangle$ and proceed
with a further case distinction:  we assume the existence of some
$t \in I'$ such that $e'^{t} = \inr y$ for some $y \in Y$, and calculate,
\begin{align*}
    \tau \comp f^{\klstar}_{H_{S\uplus  Y}} (I,e,x) =&\;
    \tau (\langle I, e \rangle \tconc \langle I', e' \rangle, x' ) \\
    =&\; \langle I, e \rangle \tconc \tau (I',e') \\
    =&\; (\tau \comp f)^\klstar_{H_{S}} \langle I, e, x \rangle \\
    =&\; (\tau \comp f)^\klstar_{H_{S}} \comp \tau ( I, e, x )
\end{align*}
Finally, we assume the non-existence of some $t \in I$ such
that $e'^{t} = \inr y$ for some $y \in Y$. For this particular case,
we can slightly abuse notation and state that
$\tau (\langle I, e \rangle \tconc \langle I', e' \rangle, x' ) =
(\langle I, e \rangle \tconc \langle I',e'\rangle, x')$. Then we
obtain,
  \begin{align*}
    \tau \comp f^{\klstar}_{H_{S\uplus  Y}} (I,e,x) =&\;
    \tau (\langle I, e \rangle \tconc \langle I', e' \rangle, x' ) \\
    =&\; (\langle I, e \rangle \tconc \langle I',e'\rangle, x') \\
    =&\; (\tau \comp f)^\klstar_{H_{S}} (I, e, x) \\
    =&\; (\tau \comp f)^\klstar_{H_{S}} \comp \tau ( I, e, x )
  \end{align*}
This concludes the proof.
\qed

\subsection*{Proof of Theorem~\ref{thm:H-HS}}
The fact that $\iota$ and $\tau$ are monad morphisms expands as follows:
\begin{align}
\iota\comp(\eta^S\c X\to H_SX) =&\; \inl,&  \label{eq:iota_eta}\\
\iota\comp (f\c X\to H_SY)^\klstar_S =&\; [\iota\comp f,\inr]\comp\iota.& \label{eq:iota_star}\\
\tau\comp (\eta^{S\uplus Y}\c X\to H_{S\uplus Y} X)  =&\; \eta^S,&  \label{eq:tau_eta}\\
\tau\comp (f\c X\to H_{S\uplus Y}Z)^\klstar_{S\uplus Y} =&\; (\tau\comp f)^\klstar_S\comp\tau.& \label{eq:tau_star}
\end{align}
Note the following simple joint properties of $\iota$ and $\tau$:
\begin{align}
\iota\comp\tau =&\; (\id\uplus [\id\uplus\bot,\inr]) \comp\iota, \label{eq:iota_tau}\\
\tau\comp H_{[f,\inr]}\id=&\;\tau\comp (H_{f}\id)\comp\tau\label{eq:tau_tau}
\end{align}
for any $f\c X\to Y\uplus Z$.

Let us show that
\begin{align}\label{eq:iota-prim}
\iota'\comp f^\klstar = [\iota'\comp f,\inr]\comp\iota'
\end{align}
for $f\c X\to H_S S$. In conjunction with the obvious equation $\iota'\cdot\eta = \inl$
this will certify that~$\iota'$ is a monad morphism.
Indeed,
\begin{flalign*}
\iota'\comp f^\klstar 
=&\; [\inl,\id]\comp\iota\comp f^\klstar_S\comp\tau\comp H_{[\inl,\id]\comp\iota\comp f}\id&\by{definition}\\*
\displaybump=&\; [\inl,\id]\comp[\iota\comp f,\inr]\comp\iota\comp\tau\comp H_{[\inl,\id]\comp\iota\comp f}\id&\by{\eqref{eq:iota_star}}\\
=&\; [\iota'\comp f,\id]\comp\iota\comp\tau\comp H_{[\inl,\id]\comp\iota\comp f}\id\\
=&\; [\iota'\comp f,\id]\comp(\id\uplus [\id\uplus\bot,\inr]) \comp\iota\comp H_{[\inl,\id]\comp\iota\comp f}\id&\by{\eqref{eq:iota_tau}}\\
=&\; [\iota'\comp f,\id]\comp(\id\uplus [\id\uplus\bot,\inr]) \comp (\id\uplus ([\inl,\id]\comp\iota\comp f\uplus\id))\comp\iota\\
=&\; [\iota'\comp f,\id]\comp(\id\uplus [[\inl,\id\uplus\bot]\comp\iota\comp f,\inr])\comp\iota\\
=&\; [\iota'\comp f,\id]\comp(\id\uplus [\iota'\comp f,\inr])\comp\iota&\by{definition}\\
=&\; [\iota'\comp f, [\iota'\comp f,\inr]]\comp\iota\\
=&\; [\iota'\comp f,\inr]\comp [\inl,\id]\comp\iota\\
=&\; [\iota'\comp f,\inr]\comp\iota'&\by{definition}
\end{flalign*}
We proceed with proving the monad laws.
\begin{itemize}
  \item $\eta^\klstar = \id$:  
  $\eta^\klstar = (\eta^S)^\klstar_S\comp\tau\comp H_{[\inl,\id]\comp\iota\comp\eta^S}\id 
  = \tau\comp H_{\inl}\id = \id$, using~\eqref{eq:iota_eta};
  \item $f^\klstar\comp\eta = f$: $f^\klstar\comp\eta = f^\klstar_S\comp\tau\comp H_{[\inl,\id]\comp\iota\comp f}\id\comp\eta^S = 
  f^\klstar_S\comp\tau\comp\eta^{S\uplus\{\bot\}} = f^\klstar_S\comp\eta^{S} = f$, using~\eqref{eq:tau_eta};
  \item $(f^\klstar\comp g)^\klstar = f^\klstar\comp g^\klstar$ where $f\c X\to H_S S$, $g\c Y\to H_XX$:
  the calculation runs as follows.
  \begin{flalign*}
 (f^\klstar\comp g)^\klstar 
=&\; (f^\klstar_S\comp\tau\comp H_{\iota'\comp f}\id\comp g)^\klstar_S
  \comp\tau\comp H_{\iota'\comp f^\klstar\comp g}\id&\by{definition}\\
\displaybump
=&\; f^\klstar_S\comp(\tau\comp H_{\iota'\comp f}\id\comp g)^\klstar_S
  \comp\tau\comp H_{\iota'\comp f^\klstar\comp g}\id&\by{monad law}\\
=&\; f^\klstar_S\comp(\tau\comp H_{\iota'\comp f}\id\comp g)^\klstar_S
  \comp\tau\comp H_{[\iota'\comp f,\inr]\comp\iota' \comp g}\id&\by{\eqref{eq:iota-prim}}\\
=&\; f^\klstar_S\comp(\tau\comp H_{\iota'\comp f}\id\comp g)^\klstar_S
  \comp\tau\comp H_{[\iota'\comp f,\inr]}\id\comp H_{\iota'\comp g}\id&\by{functoriality}\\
=&\; f^\klstar_S\comp (\tau\comp H_{\iota'\comp f}\id\comp g)^\klstar_S\comp\tau\comp H_{\iota'\comp f}\id\comp\tau\comp H_{\iota'\comp g}\id&\by{\eqref{eq:tau_tau}}\\
=&\; f^\klstar_S\comp\tau\comp H_{\iota'\comp f}\id\comp g^\klstar_X\comp\tau\comp H_{\iota'\comp g}\id&\by{\eqref{eq:tau_star}}\\
=&\; f^\klstar\comp g^\klstar.&\by{definition}
  \end{flalign*}
\end{itemize}
The isomorphism $\theta\c H_SS\iso HS$ is defined as expected: 
$\theta([0,d),e,x) = \inl\brks{[0,d],\hat e}$ where $e^t = \hat e^t$ for $t\in [0,d)$ and $\hat e^d=x$;
and $\theta(I,e) = \inr\brks{I,e}$. It is easy to see that~$\theta$ respects unit.
Let us show that $\theta$ also respects Kleisli lifting. Let $f\c X\to H_S S$ and
proceed by case distinction.
\begin{itemize}
  \item We first prove $(\theta\comp f^\klstar)(I,e) = (\theta\comp f)^\klstar(\theta(I,e))$.
  Let $\brks{I',e'}\appr\brks{I,e}$ be the largest such trajectory that 
  for all $t\in I'$, $f(e^t) \neq \eps$. By unfolding definitions, the goal reduces~to 
  \begin{align}\label{eq:alpha-iso1}
  \theta(f^\klstar_S(I',\l t.\, (\snd f(e^t))^0)) = \inr\brks{I'',\l t.\, ((\theta\comp f)(e^t))_\ev^0}.
  \end{align}
  where $I''$ is the largest subinterval of $I$, such that $(\theta\comp f)(e^t)\neq\inr\eps$
  for all $t\in I''$. By definition, $I'=I''$ and $\theta(f^\klstar_S(I',\l t.\, (\snd f(e^t))^0))=
  \brks{I',\l t.\, (\snd f(e^t))^0}$. Therefore, we are left to verify that for all 
  $t\in I'$, $(\snd f(e^t))^0 = (\theta(f(e^t)))_\ev^0$. Indeed, for any $t\in I'$,
  $f(e^t)\neq\eps$ and therefore the initial point of the trajectory returned by 
  $\theta(f(e^t))$ is the same as for $f(e^t)$. 
  \item Next we prove $(\theta\comp f^\klstar)([0,d),e,x) = (\theta\comp f)^\klstar(\theta([0,d),e,x))$,
  which unfolds to 
  \begin{align}\label{eq:alpha-iso2}
  (\theta\comp f^\klstar_S)((\tau\comp H_{\iota'\comp f}) ([0,d),e,x)) = (\theta\comp f)^\klstar(\inl\brks{[0,d],\hat e})
  \end{align}
  where $e^t = \hat e^t$ for $t\in [0,d)$ and $\hat e^d=x$. Again, let $\brks{I',e'}\appr\brks{[0,d),e}$ 
  be the largest such trajectory that for all $t\in I'$, $f(e^t) \neq \eps$. If
  $I'\neq [0,d)$ then~\eqref{eq:alpha-iso2} reduces to~\eqref{eq:alpha-iso1} and we
  are done by the previous clause. Let us proceed under the assumption that $I'=[0,d)$. 
  The left hand side of~\eqref{eq:alpha-iso2} reduces to $\theta(\brks{[0,d),\l t.\, (\snd f(e^t))^0}\later f(x))$.
  If $f(x)=\eps$ then the later reduces $\inr\brks{[0,d),\l t.\, (\snd f(e^t))^0}$
  and since $(\theta\comp f)(x) = \inr\eps$, the right hand side of~\eqref{eq:alpha-iso2}
  also reduces to the same expression by definition of the Kleisli composition of~$\BBH$. 
  Finally, consider the remaining case of $f(x)\neq\eps$. Then 
  $\theta(\brks{[0,d),\l t.\, (\snd f(e^t))^0}\later f(x)) = \inj (\brks{[0,d),\l t.\, (\snd f(e^t))^0}\tconc p)$ where 
  $\theta(f(x)) = \inj p$. Analogously, $(\theta\comp f)^\klstar(\inl\brks{[0,d],\hat e}) = 
  \inj (\brks{[0,d),\l t.\, (\snd f(e^t))^0}\tconc p)$, and we are done.
\end{itemize}
We proceed to verify correctness of the stated characterization of the iteration operator of~$\BBH$.

\begin{lemma}\label{lem:iotap-morph}
The natural transformation $\iota' = [\inl,\id]\comp\iota\c H_SS\to S\uplus\{\bot\}$
is an Elgot monad morphism.
\end{lemma}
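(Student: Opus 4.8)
As established just above, $\iota'$ is a monad morphism from the Elgot monad of \Cref{thm:H-HS} (which is isomorphic to $\BBH$) to the maybe monad $\BBM=(\argument)\uplus\{\bot\}$, so what remains is to check that $\iota'$ preserves iteration, i.e.\ that
\begin{align*}
\iota'_S\comp f^\istar = (\iota'_{S\uplus X}\comp f)^\natural
\end{align*}
for every $f\c X\to H_{S\uplus X}(S\uplus X)$, where $(\argument)^\istar$ is the iteration operator of \Cref{thm:H-HS} and $(\argument)^\natural$ that of $\BBM$ (as in \Cref{exp:maybe}). Throughout, write $g\ass(\iota'_{S\uplus X}\comp f)^\natural$ and recall that $g$ is the \emph{least} solution of $g=[\eta,g]^\klstar\comp(\iota'_{S\uplus X}\comp f)$, with $\eta$ and $(\argument)^\klstar$ taken in $\BBM$.

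First I would reduce the left-hand side by unfolding the formula $f^\istar=\tau\comp(H_{[\inl,g]}\id)\comp f^\istar_{S\uplus X}$ of \Cref{thm:H-HS}. Combining the joint identity~\eqref{eq:iota_tau} of $\iota$ and $\tau$ with the naturality of $\iota$ in its state parameter, one checks that for every $h\c X\to S\uplus\{\bot\}$
\begin{align*}
\iota'_S\comp\tau\comp(H_{[\inl,h]}\id) = \rho_h\comp\iota_{S\uplus X},
\end{align*}
where $\rho_h\c S\uplus((S\uplus X)\uplus\{\bot\})\to S\uplus\{\bot\}$ is the evident collapsing map: it leaves a convergent value from $S$ alone, rewrites a trajectory's initial value $v\in S\uplus X$ to $[\inl,h](v)$, and identifies the two copies of $\bot$. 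The remaining goal is then $\rho_g\comp\iota_{S\uplus X}\comp f^\istar_{S\uplus X}=g$.

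To establish this I would invoke \Cref{thm:HS-Elgot}, by which $f^\istar_{S\uplus X}$ equals the $\omega$-supremum $\lub_n h_n$ of the chain $h_0$ (the map constant on the empty trajectory $\eps$), $h_{n+1}=[\eta,h_n]^\klstar_{S\uplus X}\comp f$. Since $\rho_g\comp\iota_{S\uplus X}$ is monotone and $\omega$-continuous — it reads only an initial segment of a trajectory, which the information order never revises — it suffices to control $\rho_g\comp\iota_{S\uplus X}\comp h_n$ for each $n$. Pushing $\iota_{S\uplus X}$ through $[\eta,h_n]^\klstar_{S\uplus X}$ by the monad-morphism law and distinguishing whether $f(x)$ has a non-empty time domain, is of the form $\eta^{S\uplus X}(\inl s)$ or $\eta^{S\uplus X}(\inr x')$, or is the instant divergence $\eps$, one finds that $\rho_g(\iota_{S\uplus X}(h_{n+1}(x)))$ equals $g(x)$ in every case except $f(x)=\eta^{S\uplus X}(\inr x')$, where it equals $\rho_g(\iota_{S\uplus X}(h_n(x')))$ instead. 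An induction on $n$ (base case: both sides constant on $\bot$) then gives $\rho_g\comp\iota_{S\uplus X}\comp h_n\appr g$ for all $n$, with genuine equality at every $x$ whose iteration chain under $\iota'_{S\uplus X}\comp f$ is exhausted within $n$ steps of that exceptional kind; since $g(x)\neq\bot$ forces the chain from $x$ to be finite, passing to the supremum yields $\rho_g\comp\iota_{S\uplus X}\comp f^\istar_{S\uplus X}=g$, as desired.

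The step I expect to be the main obstacle is the case analysis in the induction step: threading the nested coproduct injections of $S\uplus((S\uplus X)\uplus\{\bot\})$ through the relabellings $H_{[\inl,g]}\id$ and $[\eta,h_n]^\klstar_{S\uplus X}$, and matching each case with exactly one unfolding of the fixpoint equation $g=[\eta,g]^\klstar\comp(\iota'_{S\uplus X}\comp f)$, has to be done in the same painstaking bookkeeping style as the proof of \Cref{lem:tau} and the Kleisli-lifting cases in the proof of \Cref{thm:H-HS}. A lesser side obligation is verifying the $\omega$-continuity of $\iota_{S\uplus X}$ for the orders on $H_{S\uplus X}S$ and $\BBM$, which holds because $\iota_{S\uplus X}$ inspects only an initial point of a trajectory, and along the information order trajectories only grow, without overwriting existing values.
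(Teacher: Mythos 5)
Your argument is correct, but it takes a genuinely different route from the paper's. The paper also first disposes of the monad-morphism part via~\eqref{eq:iota-prim} and then checks iteration preservation pointwise: it unfolds the explicit sequence $x_0,x_1,\ldots$ defining $(\iota'\comp f)^\natural$, treats the infinite-sequence case by a direct computation showing both sides collapse to $\inr\bot$, and treats the finite case by induction on the length of that sequence, each step using the fixpoint law of the composite $(\argument)^\istar$ of \Cref{thm:H-HS} together with~\eqref{eq:iota-prim}. You instead never touch the fixpoint law of the newly defined $(\argument)^\istar$: you absorb $\iota'\comp\tau\comp H_{[\inl,g]}\id$ into a collapsing map after $\iota_{S\uplus X}$, replace $f^\istar_{S\uplus X}$ by its Kleene chain $\lub_n h_n$ from \Cref{thm:HS-Elgot}, and induct on the approximation index, using continuity of the post-composed map. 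What your route buys is self-containedness -- it relies only on the least-fixpoint characterization of iteration on $\BBH_{S\uplus X}$, which is established independently, rather than on properties of the operator whose correctness \Cref{thm:H-HS} is still in the process of certifying -- at the cost of the two side obligations you identify (the factorization identity and $\omega$-continuity of $\rho_g\comp\iota_{S\uplus X}$), both of which do hold for the reasons you give. One small imprecision: the index at which $\rho_g\comp\iota_{S\uplus X}\comp h_n$ stabilizes at $x$ is governed by the number of \emph{leading instantaneous} steps $f(x)=\brks{\eps,\inr x'}$, not by the length of the full chain under $\iota'\comp f$ (which also counts steps where $f(x)$ has non-empty duration and initial value in $\inr X$; there your case analysis already yields equality at the very next stage). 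This does not affect the conclusion -- when $g(x)\neq\inr\bot$ the run of leading instantaneous steps is a fortiori finite, and when $g(x)=\inr\bot$ the $\appr$-bound alone forces every approximant to $\inr\bot$ -- but the bookkeeping in the induction step should be phrased in those terms.
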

\begin{proof}
We have already seen above that $\iota'$ is a monad morphism (the non-trivial 
part of this statement is equation~\eqref{eq:iota-prim}). We are left to check that $\iota'$
is iteration preserving, i.e.\ $(\iota'\comp f^\istar)(x_0) = (\iota'\comp f)^\natural(x_0)$
for all $f\c X\to H_{S\uplus X}(S\uplus X)$ and $x_0\in X$. Let us recall the 
definition of $(\iota'\comp f)^\natural(x_0)$ from Example~\ref{exp:maybe}: we build 
a sequence $x_0,x_1,\ldots$ where $\iota'(f(x_i)) = \inl\inr x_{i+1}$ for every 
$i$, and then $(\iota'\comp f)^\natural(x_0) = \inr\bot$ either if this sequence is infinite 
or $\iota'(f(x_i)) = \inr\bot$ for some $i$, and $(\iota'\comp f)^\natural(x_0) = \inl y$
if $\iota'(f(x_i)) = \inl\inl y$ for some $i$.

Suppose that the constructed sequence is infinite. This means that for every 
$i$, $f(x_i) = \brks{\eps,\inr x_{i+1}}$, or $f(x_i) = \brks{I,e,y}$ with $I\neq\emptyset$,
and $e^0 = \inr x_{i+1}$, or $f(x_i) = \brks{I,e}$ with $I\neq\emptyset$, and $e^0 = \inr x_{i+1}$.
Note also that $\iota'(f(x_i)) = \inr\bot$ for any $i$. Now, consider
\begin{align}\label{eq:iotap-f}
(\iota'\comp f^\istar)(x_0) 
=&\; ([\inl,\id]\comp\iota\comp\tau\comp H_{[\inl, (\iota'\comp f)^\natural]}\id)(f^\istar_{S\uplus X}(x_0)).
\end{align}
If for every $i$, $f(x_i) = \brks{\eps,\inr x_{i+1}}$ then $f^\istar_{S\uplus X}(x_0)=\eps$
and the whole expression~\eqref{eq:iotap-f} evaluates to $\inr\bot$. Otherwise, 
suppose that $f(x_i) = \brks{\eps,\inr x_{i+1}}$ for $i=0,\ldots,n-1$ and either 
$f(x_{n-1}) = \brks{I,e,y}$, $I\neq\emptyset$, $e^0 = \inr x_{n}$, or $f(x_{n-1}) 
= \brks{I,e}$, $I\neq\emptyset$, $e^0 = \inr x_{i+1}$. It is then easy to see 
that in each case either $f^\istar_{S\uplus X}(x_0) = \brks{I,e,y}$ or 
$f^\istar_{S\uplus X}(x_0) = \brks{I,e}$, and in both cases~$I$ non-empty and 
$e^0 = \inr x_n$. The effect of $H_{[\inl, (\iota'\comp f)^\natural]}\id$ on the 
result of $f^\istar_{S\uplus X}(x_0)$ includes replacing $e^0$ with $\inr\bot$,
which in conjunction with subsequent action of~$\tau$, turns the whole expression~\eqref{eq:iotap-f}
into $\inr\bot$.

Suppose next that the sequence $x_0,\ldots$ is finite and prove by induction over
its length~$n$ that $(\iota'\comp f^\istar)(x_0) = (\iota'\comp f)^\natural(x_0)$.
Suppose that $n>0$. Then $\iota'(f(x_0)) = \inl\inr x_{1}$ for a suitable $x_1\in X$, and
\begin{flalign*}
(\iota'\comp f^\istar)(x_0)
=&\; (\iota'\comp [\eta, f^\istar]^\klstar)(f(x_0))&\by{fixpoint law}\\
\displaybump  =&\; [\iota'\comp [\eta, f^\istar],\inr](\iota'(f(x_0)))&\by{\eqref{eq:iota-prim}}\\
=&\; [[\inl, \iota'\comp f^\istar],\inr](\iota'(f(x_0)))\\
=&\; [[\inl, \iota'\comp f^\istar],\inr](\inl\inr x_{1})&\by{assumption}\\
=&\; (\iota'\comp f^\istar)(x_{1})\\
=&\; (\iota'\comp f)^\natural(x_1)&\by{induction hypothesis}\\
=&\; [[\inl,(\iota'\comp f)^\natural], \inr] (\inl\inr x_1)\\
=&\; [[\inl,(\iota'\comp f)^\natural], \inr] (\iota'(f(x_0)))&\by{assumption}\\
=&\; (\iota'\comp f)^\natural(x_0).&\by{fixpoint law}
\end{flalign*}
If $n=0$ then either $\iota'(f(x_0)) = \inl\inl s$ or $\iota'(f(x_0)) = \inr\bot$.
The proof that $(\iota'\comp f^\istar)(x_0) = (\iota'\comp f)^\natural(x_0)$ is then
analogous to the above, except that we need not call the induction hypothesis.
\end{proof}
We proceed with the proof of Theorem~\ref{thm:H-HS}.
In order to show that the iteration operators of $\BBH_S$ and $\BBH$ are connected 
as stated, we first need to recall how the iteration operator of the latter monad
is defined. We do it by resorting to \emph{singular iteration}
and \emph{progressive iteration}~\cite{GoncharovJakobEtAl18a}. Let 
$(\argument)^\iistar$ be the iteration operator transferred from $\BBH$ to $S\mto H_S S$ along the isomorphism $\theta$.

Given $f\c X\to H_{S\uplus X}(S\uplus X)$, we say that $f$ is singular
if for all $x\in X$, $f(x)=\brks{I,e,y}$ implies that $e^t$ factors through $\inl$
for all $t>0$ from $I$, and analogously $f(x)=\brks{I,e}$ implies that $e^t$ factors through $\inl$
for all $t>0$ from $I$.
We say that such $f$ is progressive if 
for every $x\in X$, $f(x)=\brks{I,e,y}$ implies that $e^0$ factors through~$\inl$ unless $I$ is empty;
$f(x)=\brks{\eps,y}$ implies that $y$ factors through~$\inl$;
and $f(x)=\brks{I,e}$ implies that $e^0$ factors through $\inl$ unless~$I$ is empty.
Equivalently, $f$ is progressive if $\iota'\comp f\c X\to (S\uplus X)\uplus\{\bot\}$
factors through $\inl\uplus\id$.  
It is known~\cite[Theorem 20]{GoncharovJakobEtAl18a} that every $f^\iistar$ is 
decomposable as~$g^{\iistar\iistar}$ where the inner iteration is singular, 
the outer iteration is progressive and $g\c X\to H_{(S\uplus X)\uplus X}((S\uplus X)\uplus X)$
is constructed from $f$ in such a way that $f = (H_{[\id,\inr]}[\id,\inr])\comp g$.
%
What we need to show thus is that~$(\argument)^\istar$ agrees with~$(\argument)^\iistar$ both in the 
singular and in the progressive case and that $f^\istar = g^{\istar\istar}$ for 
$f$ and~$g$ as above. For general $g$, the latter equation is called the \emph{codiagonal law} of iteration.
Let us proceed by case distinction.
\begin{itemize}
  \item\emph{(singular case)}~ If $f$ is singular then for a given $x_0\in X$, $f^\iistar(x_0)$ 
  is described as follows. If $(\iota'\comp f)^\natural(x_0) = \inr\bot$ then 
  $f^\iistar(x_0)=\eps$ -- otherwise, we form the longest possible sequence $x_0,\ldots,x_n$ such 
  that $f(x_i) = \brks{\eps,\inl x_{i+1}}$ for all $i$. This sequence, must be finite,
  for otherwise $(\iota'\comp f)^\natural(x_0)$ would be equal to $\inr\bot$, and 
  also $f(x_n)$ cannot be $\eps$ for the same reason. If $f(x_n) = \brks{\eps,\inr y}$ 
  then $f^\iistar(x)=\brks{\eps, y}$. In the remaining cases, $f^\istar_{S\uplus X}(x_n)=\brks{I,e}$ or $f^\istar_{S\uplus X}(x)=\brks{I,e,y}$ 
  with $I\neq\emptyset$, and we put correspondingly $f^\iistar(x_0)=\brks{I,e'}$ and 
  $f^\iistar(x_0)=\brks{I,e',y}$ where $e'$ is calculated as follows:
%
%
%
  \begin{align*}
  e'^t =&\; 
  \left\{\begin{array}{ll}
        s,   & \text{if~~} t=0 \text{~~and~~} (\iota'\comp f)^\natural(x_0) = \inl s\\
        s, & \text{if~~} t>0 \text{~~and~~} e^t    = \inl s
        \end{array}\right. 
  \end{align*}  
Let us go through these clauses and argue that in each clause $f^\istar(x_0)$ is 
defined in the same way. By Lemma~\ref{lem:iotap-morph}, if 
$(\iota'\comp f)^\natural(x_0) = \iota'\comp f^\istar(x_0) = \inr\bot$ then
$f^\istar(x)=\eps=f^\iistar(x)$. We proceed under the assumption that 
$(\iota'\comp f)^\natural(x)\neq\inr\bot$. If $f(x_n) = \brks{\eps,\inr y}$ then
$f^\istar_{S\uplus X}(x_0)$ in the expression 
\begin{align*}
    (\tau\comp H_{[\inl, (\iota'\comp f)^\natural]}\id)(f^\istar_{S\uplus X}(x_0))
\end{align*} 
for $f^\istar(x_0)$, evaluates to $\eta (y)=\brks{\eps,y}$ and hence, the whole 
expression evaluates to $\eta(y)$. Finally, if 
$f^\istar_{S\uplus X}(x_n)=\brks{I,e}$ then $f^\istar_{S\uplus X}(x_0)=\brks{I,e}$
and the effect of $\tau\comp H_{[\inl, (\iota'\comp f)^\natural]}\id$ on~$\brks{I,e}$
in the above expression coincides with the effect described above: for all points 
of $e$, except for the initial one $\tau\comp H_{[\inl, (\iota'\comp f)^\natural]}\id$ simply removes $\inl$,
and for $e^0$ it additionally calls $(\iota'\comp f)^\natural$. But since 
$e^0 = (\iota'\comp f)^\natural(x_n) = (\iota'\comp f)^\natural(x_0)$ and the latter
must be of the form $\inl s$, for the new trajectory $\brks{I,e'}$, $e'^0 = s$. 
The case $f^\istar_{S\uplus X}(x_n)=\brks{I,e,y}$ is analogous.
  \item\emph{(progressive case)}~ The original construction~\cite{GoncharovJakobEtAl18a} 
  of $(\argument)^\iistar$ essentially amounts to regarding the given $f\c X\to H_{S\uplus X}(S\uplus X)$ as
  operating on \emph{partial trajectories} (i.e.\ such pairs $\brks{I,e\c I\to S\uplus X}$
  that $e$ is a partial function), on which the requisite fixpoint is
  calculated as a suitable limit of an $\omega$-chain of approximations, subsequently
  trimmed by discarding trajectory fragments that occur after points of undefinedness. 
  
  For progressive $f$ this construction simplifies: we have 
  $f^\iistar = \hat f_S^\istar$, where $\hat f\c X\to H_S(S\uplus X)$ is manufactured
  from $f$ as follows. 
  Given $x\in X$, if $f(x)=\brks{I,e,y}$ or $f(x)=\brks{I,e}$ and for some $t\in I$,
  $e^t = \inr x'$ and $([\inl,\id]\comp\iota\comp f)(x')=\inr\bot$
  then $\hat f(x) = \brks{I',e'}$ where $I'$ is the largest subinterval of $I$
  that does not contain such $t$ and for every $t\in I'$, $e'^t = s$ where either 
  $e^t = \inl s$ or $e^t = \inr x'$ and $([\inl,\id]\comp\iota\comp f)(x') = \inl\inl s$. 
  Otherwise, i.e.\ if the indicated $t\in I$ does not exist, $\hat f(x) = 
  \brks{I,e',y}$ for $f(x) = \brks{I,e,y}$ and or $\hat f(x)=\brks{I,e'}$ 
  for $f(x) = \brks{e,y}$, where $e'$ is calculated as before under~$I'=I$. 
  
  To show that $f^\iistar=f^\istar$, it therefore suffices to verify that 
  $\hat f$ equals to 
  \begin{align}\label{eq:f-hat}
    \tau\comp H_{[\inl, (\iota'\comp f)^\natural]}\id \comp f.
  \end{align}
  Using progressiveness of $f$, note that for every $x\in X$,
  $(\iota'\comp f)^\natural(x) = \inl y$ if $f(x) = \brks{\eps,\inl y}$, 
  $(\iota'\comp f)^\natural(x) = \inr\bot$ if $f(x) = \eps$, and 
  $(\iota'\comp f)^\natural(x) = \inl e^0$ 
  if $f(x) = \brks{I,e}$ or $f(x) = \brks{I,e,y}$ with $I\neq\emptyset$. By definition, 
  the effect of $H_{[\inl, (\iota'\comp f)^\natural]}\id$ on each $f(x)$ can 
  be described as follows: every $\brks{I,e}\in f(x)$ is sent to $\brks{I,e'}$ and 
  every $\brks{I,e,y}\in f(x)$ is sent to $\brks{I,e',y}$ where $e'$ is obtained from $e$
  by case distinction: 
  \begin{align*}
  e'^t =&\; 
  \left\{\begin{array}{ll}
        e^t, & \text{if~~} e^t \text{~~factors through~} \inl\\
        \inr\bot, &\text{if~~} e^t = \inr x' \text{~~and~~} f(x') = \eps\\
        \inl x'', &\text{if~~} e^t = \inr x' \text{~~and~~} f(x') = \brks{\eps,\inl x''}\\
        u^0, &\text{if~~} e^t = \inr x'\text{~~and~~} f(x') = \brks{J,u} \text{~~and~~} J\neq\emptyset\\
        u^0, &\text{if~~} e^t = \inr x' \text{~~and~~} f(x') = \brks{J,u,x''}\text{~~and~~} J\neq\emptyset
        \end{array}\right. 
  \end{align*}
The effect of $\tau$ in~\eqref{eq:f-hat} on the result amounts to restricting 
the obtained trajectories to subintervals on which the above clause returning
$\inr\bot$ is not effective. The total action of \eqref{eq:f-hat} on $x\in X$
then coincides with the corresponding action of $\hat f$, as described above.  
  \item\emph{(codiagonal law)}~ We will show that $g^{\istar\istar}=(H_{[\id,\inr]}[\id,\inr]\comp g)^\istar$
  for any given $g\c X\to H_{(S\uplus X)\uplus X}((S\uplus X)\uplus X)$. This is 
  one of the axioms of Elgot monads, in particular, it holds for $\BBH_S$, which 
  fact we are going to use in the following calculation. Let 
  \begin{flalign*}
    h =&\; (\iota'\comp H_{[\id,\inr]}[\id,\inr]\comp g)^\natural\c X\to S\uplus\{\bot\},\\
    w =&\; (\iota'\comp  g^\istar)^\natural\c X\to S\uplus \{\bot\},\\
    u =&\; (\iota'\comp  g)^\natural\c X\to (S\uplus X)\uplus \{\bot\}.
  \end{flalign*}
  Then, on the one hand:
  \begin{flalign*}
  (H_{[\id,\inr]}&[\id,\inr]\comp g)^\istar\\ 
  \displaybump  =&\; (\tau\comp H_{[\inl,h]}\id \comp H_{[\id,\inr]}[\id,\inr]\comp g)^\istar_S&\by{definition}\\
    =&\; \tau\comp H_{[\inl,h]\comp[\id,\inr]}\id\comp (H_{\id} [\id,\inr]\comp g)^\istar_{(S\uplus X)\uplus X}\\
    =&\; \tau\comp H_{[[\inl,h],h]}\id\comp (H_{\id} [\id,\inr]\comp g)^\istar_{(S\uplus X)\uplus X}
  \intertext{and on the other hand:}
  g^{\istar\istar}
    =&\; (\tau\comp H_{[\inl,w]}\id\comp (\tau\comp H_{[\inl,u]}\id\comp g)^\istar_{S\uplus X})^\istar_S&\by{definition}\\
    =&\; (\tau\comp H_{[\inl,w]}\id\comp\tau\comp H_{[\inl,u]}\id\comp g^\istar_{(S\uplus X)\uplus X})^\istar_S\\
    =&\; (\tau\comp H_{[[\inl,w],\inr]}\id\comp H_{[\inl,u]}\id\comp g^\istar_{(S\uplus X)\uplus X})^\istar_S&\by{\eqref{eq:tau_tau}}\\
    =&\; (\tau\comp H_{[[\inl,w],\inr]\comp [\inl,u]}\id\comp g^\istar_{(S\uplus X)\uplus X})^\istar_S\\
    =&\; (\tau\comp H_{[[\inl,w],[[\inl,w],\inr]\comp u]}\id\comp g^\istar_{(S\uplus X)\uplus X})^\istar_S\\
    =&\; \tau\comp H_{[[\inl,w],[[\inl,w],\inr]\comp u]}\id\comp (g^\istar_{(S\uplus X)\uplus X})^\istar_{(S\uplus X)\uplus X}\\
    =&\; \tau\comp H_{[[\inl,w],[[\inl,w],\inr]\comp u]}\id\comp (H_{\id} [\id,\inr]\comp g)^\istar_{(S\uplus X)\uplus X}.&\by{codiagonal}
  \end{flalign*}
To obtain the desired equality, we are thus left to show that $h = w$ and $w = {[[\inl,w],\inr]\comp u}$.
Indeed,
\begin{flalign*}
h 
=&\; (\iota'\comp H_{[\id,\inr]}[\id,\inr]\comp g)^\natural&\\
  \displaybump=&\; (([\id,\inr]\uplus\id)\comp\iota'\comp g)^\natural&\by{\eqref{eq:iota_star},\eqref{eq:iota_eta}}\\
=&\; ((\iota'\comp g)^\natural)^\natural&\by{codiagonal law}\\
=&\; (\iota'\comp  g^\istar)^\natural&\by{Lemma~\ref{lem:iotap-morph}}\\
=&\; w
\intertext{and}
w= &\; (\iota'\comp  g^\istar)^\natural\\
= &\; ((\iota'\comp  g)^\natural)^\natural&\by{Lemma~\ref{lem:iotap-morph}}\\
= &\; [[\inl, ((\iota'\comp  g)^\natural)^\natural],\inr]\comp (\iota'\comp  g)^\natural&\by{fixpoint law}\\
= &\; [[\inl, w],\inr]\comp \iota'\comp u.
\end{flalign*}
\end{itemize}

This completes the proof of Theorem~\ref{thm:H-HS}.
\qed  
  
\end{document}